\documentclass[aps,prb,10pt,twocolumn,superscriptaddress]{revtex4-2}

\usepackage{graphicx}
\usepackage[linesnumbered,ruled,vlined]{algorithm2e}
\SetKwInput{KwRequire}{Require}
\SetKwInput{KwReturn}{Return}
\usepackage{xcolor}
\usepackage{physics}
\usepackage{tikz}
\usetikzlibrary{shapes.geometric}
\usepackage{dsfont}
\usepackage{booktabs}
\usepackage{amsfonts}
\usepackage{amsmath}
\usepackage{amssymb}
\usepackage{amsthm}
\usepackage{quantikz}
\usepackage{tcolorbox}
\usepackage{enumerate}
\usepackage{orcidlink}
\usepackage{bm}
\usepackage{mathtools}
\usepackage{braket}
\usepackage[right]{lineno}
\usepackage{listings}
\usepackage{subcaption}
\usepackage{caption}
\usepackage{enumitem}
\usepackage{bbm}
\usepackage{array}

\usepackage{hyperref}
\usepackage{todonotes}

\usepackage{hyperref}
\hypersetup{hidelinks}
\usepackage[capitalize,nameinlink]{cleveref}

\newtheorem{theorem}{Theorem}

\theoremstyle{definition}

\crefname{equation}{Eq.}{Eqs.}
\Crefname{equation}{Eq.}{Eqs.}
\crefname{section}{Sec.}{Secs.}
\Crefname{section}{Sec.}{Secs.}
\crefname{figure}{Fig.}{Figs.}
\Crefname{figure}{Fig.}{Figs.}
\crefname{appendix}{Appendix}{Appendices}
\Crefname{appendix}{Appendix}{Appendices}
\crefname{table}{Table}{Tables}
\Crefname{table}{Table}{Tables}
\crefname{theorem}{Theorem}{Theorems}
\Crefname{theorem}{Theorem}{Theorems}
\crefname{lemma}{Lemma}{Lemmas}
\Crefname{lemma}{Lemma}{Lemmas}
\crefname{corollary}{Corollary}{Corollaries}
\Crefname{corollary}{Corollary}{Corollaries}
\crefname{definition}{Definition}{Definitions}
\Crefname{definition}{Definition}{Definitions}
\crefname{algocf}{Algorithm}{Algorithms}
\Crefname{algocf}{Algorithm}{Algorithms}

\newcommand{\bs}{\boldsymbol}

\newcommand{\id}{\mathds{1}}                 
\newcommand{\idtwo}{\mathds{1}_{2}}

\newcommand{\idrest}{\mathds{1}_{\backslash \ell}}

\begin{document}

\title{Noisy quantum circuit simulation with the tensor jump method}

\author{Maximilian Fröhlich\,\orcidlink{0009-0007-5276-2858}}
\email{froehlich@wias-berlin.de}
\affiliation{Weierstrass Institute for Applied Analysis and Stochastics, Berlin, Germany}

\author{Aaron Sander\,\orcidlink{0009-0007-9166-6113}}
\email{aaron.sander@tum.de}
\affiliation{Technical University of Munich, Munich, Germany}

\author{Martin Eigel\,\orcidlink{0000-0003-2687-4497}}
\affiliation{Weierstrass Institute for Applied Analysis and Stochastics, Berlin, Germany}

\author{Robert Wille\,\orcidlink{0000-0002-4993-7860}}
\affiliation{Technical University of Munich, Munich, Germany}
\affiliation{Munich Quantum Software Company GmbH, Munich, Germany}
\affiliation{Software Competence Center Hagenberg GmbH (SCCH), Hagenberg, Austria}

\author{Michael Hintermüller\,\orcidlink{0000-0001-9471-2479}}
\affiliation{Weierstrass Institute for Applied Analysis and Stochastics, Berlin, Germany}

\date{June 29, 2026}

\begin{abstract}
Classical simulation of noisy quantum circuits is essential for validating algorithms, benchmarking hardware, and guiding error-mitigation strategies. Yet it remains challenging due to either the exponential memory footprint of density matrix methods or the high variance in trajectories of standard Kraus-insertion approaches. We introduce a variance aware tensor network framework that unifies the tensor jump method (TJM) with a local time-dependent variational principle (TDVP) and a sparse Pauli-Lindblad model (SPLM) of hardware noise. Gates are realized as short variational evolutions on the matrix product state (MPS) manifold, while noise is applied stochastically per circuit window using Pauli-Lindblad jump sets that render the sampling hazard state-independent and the dissipative contraction trivial after renormalization. Strikingly, our method supports arbitrary Lindblad correlated multi-qubit noise channels consistent with hardware connectivity, including long-range noise operators on non-adjacent subsets, enabling direct simulation of crosstalk and other connectivity-induced errors beyond local noise models. Building on prior work that introduced analog unitary unravelings, we adapt this scheme to a framework based on the Lindblad master equation with exact generator matching under discretized angle laws and a parameter choice. This exactly matches the Lindblad generator, while selecting Gaussian or two-point angle laws. In parallel, we introduce an MPS-projector-jump unraveling that yields state-independent hazards and closed-form variance laws. Both schemes are unbiased and inherit the standard $1/\sqrt{N}$ Monte Carlo convergence, but with substantially smaller constants due to variance reduction. Empirically, projector sampling markedly reduces the required bond dimension per trajectory across many circuit architectures, while analog sampling excels at low noise. We demonstrate accurate, scalable noisy-circuit simulation on a 25-qubit noisy XY quench and IBM's 127-qubit kicked Ising benchmark with long-range depolarizing noise, achieving substantially reduced Monte Carlo variance and favorable bond dimension growth compared to standard Kraus insertion baselines.
\end{abstract}

\maketitle

\noindent\textbf{MSC 2020.} 81-08, 65C05, 81P68, 15A69
\medskip

\section{Introduction}
\label{sec:intro}

Classical simulation of quantum circuits remains a central ingredient in quantum computing: it supports algorithm verification, compiler validation, device benchmarking, and the quantitative assessment of noise and error-mitigation strategies. While noiseless circuit simulation has seen dramatic progress, from stabilizer-based techniques for structured circuits to tensor network and Pauli-propagation approaches for generic architectures \cite{angrisani2025simulatingquantumcircuitsarbitrary,Aaronson_2004,Vidal_2003,rudolph_pauli_propagation_2024,sander2025quantumcircuitsimulationlocal}, the noisy setting is substantially more challenging because noise turns pure state evolution into an effectively mixed state process.

A standard route to exact noisy simulation is density-matrix propagation, applying gates as $\rho \mapsto U\rho U^\dagger$ and noise as a CPTP map in Kraus form $\rho \mapsto \sum_\alpha K_\alpha \rho K_\alpha^\dagger$ \cite{qiskit2024}. This is fully general but requires storing a $2^n\times 2^n$ operator and therefore scales as $\mathcal{O}(4^n)$ in memory (and analogously, time). Operator compression via matrix product operators/density operators (MPO/MPDO) can soften this barrier, but for generic circuits and correlated noise the required operator bond dimension can still grow rapidly, so practical accuracy becomes a cost-truncation tradeoff \cite{zwolak_vidal_2004,verstraete_mpdo_2004}.

Important large-scale specializations exist when circuits or noise are structured. For Clifford(-dominated) circuits with Pauli-stochastic noise, stabilizer and tableau simulators can scale to very large system sizes, but do not natively cover generic non-Clifford resources or coherent and non-Pauli noise without additional overhead or approximation \cite{Aaronson_2004,stim_2021}. In the same spirit, Pauli-propagation methods evolve observables backward through gates and adjoint noise channels; they can be extremely efficient when the Pauli expansion remains sparse, but can suffer rapid term growth with depth, non-Clifford mixing, or strongly correlating multi-qubit noise \cite{rudolph_pauli_propagation_2024,rudolph_pauli_propagation_noise_2025}.

A complementary strategy is trajectory sampling via the Monte Carlo wave function method (MCWF), where the mixed state evolution is recovered from an ensemble of pure state trajectories generated by non-Hermitian drift interspersed with stochastic jumps \cite{PlenioKnight,dalibard_castin_molmer_1992}. Each trajectory inherits the favorable $\mathcal{O}(2^n)$ state-vector scaling and can be combined with MPS compression, but efficiency is often limited by trajectory entanglement growth and by Monte Carlo variance, which can force a large number of samples even when each trajectory is individually cheap.

In this work we develop a trajectory-based tensor network framework for noisy quantum circuit simulation that unifies (i) local time-dependent variational principle (local TDVP) gate application on the MPS manifold \cite{sander2025quantumcircuitsimulationlocal} with (ii) tensor network MCWF sampling in the spirit of the tensor jump method (TJM) \cite{sander2025largescalestochasticsimulationopen}. Relative to Pauli propagation, the method evolves full (pure) trajectory states so many observables can be extracted from the same trajectories without observable-specific branching overhead \cite{rudolph_pauli_propagation_2024,rudolph_pauli_propagation_noise_2025}. Relative to density-matrix MPO/MPDO approaches, it trades determinism for memory efficiency by representing the mixed state as an ensemble of cheaper MPS trajectories \cite{zwolak_vidal_2004,verstraete_mpdo_2004}.

To obtain a compact, device-calibrated description of realistic noise, we focus on the sparse Pauli-Lindblad model (SPLM), which represents each layer's noise as the exponential of a Lindblad generator supported on a sparse set of low-weight Pauli strings \cite{vandenBerg2024techniqueslearning,van_den_Berg_2023}. This choice is algorithmically decisive: for Pauli jumps $L_m=P_m$ one has $P_m^\dagger P_m=\id$, so jump hazards become state-independent and the dissipative contraction collapses to a global scalar that cancels upon renormalization; jump probabilities can therefore be precomputed per layer. Moreover, coherent hardware errors can be Pauli-tailored into stochastic Pauli noise via randomized compiling/Pauli twirling, so focusing on SPLM is often a pragmatic modeling choice rather than a fundamental restriction \cite{WallmanEmerson2016randomizedcompiling}.

Beyond local single- and nearest-neighbor noise, calibrated hardware models can include correlated errors consistent with device connectivity, including effective two-qubit channels on non-adjacent subsets. Such long-range correlated noise can be cumbersome for many simulators because it requires SWAP-based locality reductions or induces rapid operator growth, whereas in our setting the relevant collapse operators reduce (up to a scalar) to $a\,\id+b\,P$, admitting an exact bond-$2$ MPO independent of separation (Sec.~\ref{sec:longrange-mpo}). Finally, we exploit unraveling freedom in Lindblad dynamics: recent work uses state-dependent mixing to minimize trajectory entanglement \cite{VovkPichler_PRL_2022,VovkPichler_PRA_2024}, while here we develop two complementary variance-aware unravelings tailored to Pauli-Lindblad terms, an analog unitary-mixture unraveling with exact generator matching \cite{AnalogSimQuantinuum,Chen_2024} and a projector-jump unraveling with closed-form variance laws on absorbing circuit windows (Sec.~\ref{sec:proj_unraveling}).

We demonstrate the resulting circuit Tensor Jump Method (cTJM) on many-body circuit benchmarks, empirically identifying a practical regime separation: analog sampling is most effective at weak noise, whereas projector sampling is advantageous at moderate-to-strong noise by reducing both estimator variance and bond dimension growth.

\subsection{Organization}
In \cref{sec:localTDVP,sec:tjm,sec:splm} we review local TDVP circuit evolution, the TJM for open systems, and the SPLM noise model.
\Cref{sec:cTJM} introduces cTJM.
\Cref{sec:analog-unraveling} and \cref{theorem:projector-variance} develop the analog and projector unravelings and their variance properties, including the long-range MPO construction in \cref{sec:longrange-mpo}.
\Cref{sec:variance-check} validates the variance predictions in a controlled two-qubit setting, and \cref{sec:xy-quench-bitflip,sec:127-qubits} present many-body circuit benchmarks of 25 and 127 qubits with various noise models.
We conclude with a discussion of hybrid unraveling strategies and variance-driven adaptive choices.

\section{Background}

In this section we summarize the main ideas of the quantum circuit simulation with the local TDVP and the TJM for open quantum systems \cite{sander2025largescalestochasticsimulationopen, sander2025quantumcircuitsimulationlocal}, which are then combined and modified to the cTJM in \cref{sec:cTJM}.

\subsection{Local Time-Dependent Variational Principle for Quantum Circuits}
\label{sec:localTDVP}

The \emph{local Time-Dependent Variational Principle} (local TDVP) is a variational method for simulating quantum circuit evolution within the MPS manifold. Unlike TEBD, which applies discrete gate unitaries with potential overhead from SWAP networks, circuit TDVP interprets each multi-qubit gate $U = e^{-iH}$ as the exponential of a local Hamiltonian $H$ and simulates its action by integrating the corresponding Schrödinger equation, projected onto the MPS manifold.

At each gate application, the time derivative is projected onto the local tangent space of the current MPS, ensuring that the evolution remains variationally optimal for the chosen bond dimension. This approach naturally accommodates gates between non-neighboring qubits, thereby eliminating the need for explicit SWAP operations and enabling efficient simulation of circuits with arbitrary connectivity.

Throughout the evolution, the MPS bond dimension can be dynamically adjusted, providing a balance between computational cost and simulation accuracy, and preventing the uncontrolled growth in entanglement characteristic of TEBD-based circuit simulations. In \cite{sander2025quantumcircuitsimulationlocal} we have shown that there are cases in which local TDVP uses the MPS parameters more efficiently than TEBD, such that the bond dimension growth is much slower while having the same complexity and accuracy.

\subsection{Tensor Jump Method}
\label{sec:tjm}
In this section, the main idea of the TJM is recalled, which was recently proposed in \cite{sander2025largescalestochasticsimulationopen}.
We first recall the high-level idea, which is basically the MCWF method in tensor format.
The goal is to simulate the Lindblad master equation (LME)
\begin{equation}
  \label{eq:lme}
  \frac{d\rho}{dt} \;=\; -i[H,\rho(t)] + \sum_{m=1}^k \gamma_m \!\left(L_m \rho L_m^\dagger - \tfrac12\{L_m^\dagger L_m,\rho(t)\}\right).
\end{equation}
For any observable $O$, the trajectory estimator $X_t=\langle\psi_t|O|\psi_t\rangle$ satisfies \\
$\frac{1}{N}\sum_{i=1}^N X_t^{(i)} \to \Tr(O\rho(t))$ with Monte Carlo error $\sim \sqrt{\text{Var}(X_t)/N}$.

\subsubsection{One time step of TJM}
\label{sec:tjm-step}
Fix a step size $\delta t$ and let $|\psi\rangle$ be the (normalized) input MPS for this step. Define the non-Hermitian effective Hamiltonian
\begin{equation}
  \label{eq:Heff}
  H_{\mathrm{eff}} \;=\; H - \tfrac{i}{2}\sum_{m=1}^k \gamma_m L_m^\dagger L_m .
\end{equation}
The TJM realizes the drift $e^{-iH_{\mathrm{eff}}\delta t}$ via Strang splitting into a Hermitian unitary step and a dissipative contraction,
\begin{equation}
\begin{aligned}
  \label{eq:strang-splitting}
  &e^{-i H_{\mathrm{eff}}\delta t} \\
  \;&=\;
  \underbrace{e^{-\tfrac{1}{2}\delta t \sum_m \gamma_m L_m^\dagger L_m}}_{:=\,\mathcal{D}[\delta t/2]}
  \;\underbrace{e^{-i H\,\delta t}}_{:=\,\mathcal{U}[\delta t]}
  \;\underbrace{e^{-\tfrac{1}{2}\delta t \sum_m \gamma_m L_m^\dagger L_m}}_{:=\,\mathcal{D}[\delta t/2]}
  \;+\; \mathcal{O}(\delta t^3).
  \end{aligned}
\end{equation}
Accordingly, the unnormalized no-jump update is
\begin{equation}
  \label{eq:drift}
  |\tilde\psi\rangle
  \;=\;
  \mathcal{D}[\delta t/2]\;\mathcal{U}[\delta t]\;\mathcal{D}[\delta t/2]\;|\psi\rangle .
\end{equation}
In MPS form, $\mathcal{U}[\delta t]$ is applied variationally (e.g., via 1-TDVP/2-TDVP) on the MPS manifold, while $\mathcal{D}[\delta t/2]$ is a non-unitary contraction \cite{Haegeman_2013,Haegeman_2016,Haegeman2011,sander2025largescalestochasticsimulationopen}.

The total jump probability is determined by the norm loss,
\begin{equation}
  \label{eq:delta-p}
  \delta p \;=\; 1-\|\tilde\psi\|^2
  \;\approx\;
  \delta t \sum_{m=1}^k \gamma_m \langle\psi|L_m^\dagger L_m|\psi\rangle ,
\end{equation}
where the approximation is the standard first-order MCWF hazard for small $\delta t$.
With probability $1-\delta p$ set $|\psi'\rangle = |\tilde\psi\rangle/\|\tilde\psi\|$ (no jump). Otherwise, sample a channel index $m$ with
\begin{equation}
  \label{eq:jump-prob}
  \mathbb{P}[m\mid\text{jump}]
  \;=\;
  \frac{\gamma_m \langle\psi| L_m^\dagger L_m |\psi\rangle}{\sum_{j=1}^k \gamma_j \langle\psi| L_j^\dagger L_j |\psi\rangle}
\end{equation}
and apply the normalized jump
\begin{equation}
  \label{eq:jump-sampling}
  |\psi'\rangle \;=\; \frac{L_m|\psi\rangle}{\|L_m|\psi\rangle\|}.
\end{equation}
In tensor form, $L_m$ is applied by contracting the corresponding local operator into the affected site(s), followed by re-canonicalization.

\subsection{Sparse Pauli-Lindblad Model}
\label{sec:splm}

A key ingredient for scalable simulation of noisy quantum circuits is a compact yet accurate representation of the noise. The \emph{Sparse Pauli-Lindblad Model} (SPLM)~\cite{van_den_Berg_2023} provides such a representation by describing the noise channel in each circuit layer as the exponential of a Lindblad generator with only a sparse set of local Pauli terms.

\subsubsection{Definition}

Consider an $n$-qubit noise channel $\Lambda$ arising from a sparse set of local noise interactions. The SPLM assumes that the channel is Markovian and can be written in Lindblad form with Pauli jump operators
\begin{equation}
  \mathcal{L}(\rho) = \sum_{m=1}^k \gamma_m \left( P_m \rho P_m^\dagger - \rho \right),
  \label{eq:splm_generator}
\end{equation}
with $P_m \in \{\idtwo,X,Y,Z\}^{\otimes n}$, $\gamma_m \ge 0$ are non-negative rates, and the support of each $P_m$ is restricted to match the device connectivity (typically weight-one and weight-two Paulis).

The corresponding noise channel is
\begin{equation}
  \Lambda(\rho) = e^{\mathcal{L}}(\rho)
  = \bigcirc_{m=1}^k \left( w_m \, \rho + (1-w_m) \, P_m \rho P_m^\dagger \right),
  \label{eq:splm_channel}
\end{equation}
with
\begin{equation}
  w_m = \frac{1 + e^{-2\gamma_m}}{2},
\end{equation}
where $\bigcirc_{m=1}^k$ denotes the composition of $k$ Kraus channels.
Since all $P_m$ commute up to a phase, the factors in~\eqref{eq:splm_channel} commute and can be applied in any order.

\subsubsection{Motivation for combination with TJM}
\label{sec:motivation}
In the TJM formulation of \cref{sec:tjm}, the Lindblad part is written in terms of
generic jump operators $L_m$ and rates $\gamma_m$.
For an SPLM channel, we simply take
\[
  L_m := P_m
\]
so that \eqref{eq:lme} reduces to the Pauli-Lindblad form \eqref{eq:splm_generator}.
In particular,
\begin{equation}
  P_m P_m^\dagger = \id \quad \forall\,m,
\end{equation}
which is exploited to simplify both the jump sampling and the dissipative contraction.

Consequently, we get that
\begin{equation}
  \delta p_m = \delta t \gamma_m \langle \psi(t) | P_m^\dagger P_m | \psi(t) \rangle = \delta t \gamma_m \langle \psi(t)| \psi(t) \rangle \quad \forall m.
\end{equation}
Thus the probability distribution stays the same for all $n$ timesteps.
Instead of calculating it again, it simply has to be calculated once before the first timestep.
With this we also do not need the stochastic MPS $\Phi$ anymore, which is an auxiliary MPS used in \cite{sander2025largescalestochasticsimulationopen} to sample the jump operators in each timestep.
Since the sampling is no longer time dependent, we can precompute the fixed jump probabilities
\[
  p_m \;=\; \frac{\gamma_m}{\sum_{j=1}^k \gamma_j}, \quad m=1,\dots,k,
\]
store the probability vector $\bs p := (p_1,\dots,p_k)$, and sample an index $m$ from $\bs p$ in every timestep where $\epsilon < \delta p$.
A third simplification provided by the SPLM concerns the dissipative contraction:
\begin{equation}
  \label{eq:DissipativeMPO}
  \begin{aligned}
    \mathcal{D}[\delta t]
    &= \exp\!\left[-i\left(-\frac{i}{2}\sum_{\ell=1}^{L}\sum_{j\in S(\ell)}\gamma_j\,(\idrest \otimes L_j^{[\ell]\dagger}L_j^{[\ell]})\right)\delta t\right] \\
    &= \prod_{\ell=1}^{L}\exp\!\left(-\frac{1}{2}\delta t\sum_{j\in S(\ell)}\gamma_j\,(\idrest \otimes L_j^{[\ell]\dagger}L_j^{[\ell]})\right)\\
    &= \prod_{\ell=1}^{L}\exp\!\left(\idrest \otimes\left[-\frac{1}{2}\delta t\sum_{j\in S(\ell)}\gamma_j\,L_j^{[\ell]\dagger}L_j^{[\ell]}\right]\right) \\
    &= \prod_{\ell=1}^{L}\exp\!\left( -\tfrac{1}{2}\Gamma_\ell \delta t \, (\idrest \otimes \idtwo) \right)
    = e^{-\frac{1}{2}\delta t \sum_{\ell=1}^{L} \Gamma_\ell}\, \id \\
    &= e^{-\frac{1}{2}\delta t \Gamma_{\mathrm{tot}}}\, \id,
  \end{aligned}
\end{equation}
where we have set $\Gamma_\ell:=\sum_{j\in S(\ell)}\gamma_j,\; L_j^{[\ell]\dagger}L_j^{[\ell]}=\idtwo$ and $S(\ell) \subset \{1,\ldots,k\}$ is the set of jump operators that affect site $\ell$. Thus the dissipative contraction reduces to a simple scaling of the site tensors, which is physically irrelevant after renormalization of the state. It only serves the norm reduction to sample the jump operators.

\section{Circuit Tensor Jump Method}
\label{sec:cTJM}
We adapt the TJM to noisy quantum circuits by interleaving its trajectory sampling using the dissipative contraction and the stochastic jump rule with the local TDVP update for quantum circuits described in \cref{sec:localTDVP}.

Let $\mathcal{U}=(U_1,\dots,U_M)$ be a circuit on $n$ qubits. $U_g$ acts on the qubit set $Q_g$, where we restrict ourselves to $|Q_g| \in \{1,2\}$.
Noise is given by an arbitrary Markovian Lindblad model with jump set $\{(\gamma_m,L_m)\}_{m=1}^k$ where each $L_m$ is an arbitrary Lindblad operator that has finite support in the qubit line.
For the gate $U_g$ we restrict to the local generator that only acts on the qubits $Q_g$ and contains only the local jump set which we define as $S_g := \{\, m \in \{1,\dots,k\} \;\vert\; \mathrm{supp}(L_m)\cap Q_g\neq\emptyset \,\}$.
Thus we can define the generator of the local noise as
\begin{equation}
  \label{eq:cTJM-local-L}
  \mathcal{L}_g(\rho)
  =\sum_{m \in S_g}\!
  \gamma_m\!\left(L_m\rho L_m^\dagger-\tfrac12\{L_m^\dagger L_m,\rho\}\right),
\end{equation}
so that only channels that touch $Q_g$ are considered in the window. To avoid confusion, we explicitly denote by $S(\ell)$ the set of all jump operators acting on site $\ell$, and by $S_g$ those whose support intersects the gate support $Q_g$.
In particular, for a single-qubit gate $U_g$ on site $\ell$ (i.e.\ $Q_g=\{\ell\}$) we have $S_g = S(\ell)$ and for multi-qubit gates we have $S_g = \cup_{\ell \in Q_g}S(\ell)$.

The deterministic part in the cTJM step then works as follows:
We write the trajectory state $|\psi\rangle$ as an MPS, apply $U_g=\exp(-i H_g)$ variationally on the local tangent space of the tensors in $Q_g$ (1-TDVP or 2-TDVP), re-canonicalize, and truncate if needed. TDVP works as an orthogonal projector on the manifold of a given bond dimension and makes it possible to apply long-range gates without SWAP-gate insertion \cite{sander2025quantumcircuitsimulationlocal}.

After the TDVP gate update, we apply the dissipative contraction
$\mathcal{D}[\delta t]$ followed by jump sampling
$\mathcal{J}_\epsilon[\delta t]$. The maps $\mathcal{D}$ and
$\mathcal{J}_\epsilon$ are exactly the same as in the original TJM scheme
described in \cref{sec:tjm}; the only difference is that in cTJM we apply
them once per gate with a full time step $\delta t$ and omit the half-step
contractions.
We allow for at most one jump per 2-qubit gate per trajectory, which could be generalized to multi-jump events. Nevertheless this is out of the scope of this work.
Finally, we recover the circuit-level Lindblad expectations by the same trajectory averaging as in \cite{sander2025largescalestochasticsimulationopen}
\[
  \widehat{\langle O\rangle}(g)\;=\;\frac{1}{N}\sum_{i=1}^N \bra{\psi^{(i)}_g} O \ket{\psi^{(i)}_g}\,,
\]
evaluated at the desired gate index $g$ (or measurement positions).

\subsection{Algorithm sketch (one trajectory)}
\label{sec:ctjm-algorithm}
We mirror the TJM step gate-by-gate: after applying each circuit gate via local TDVP on its support $Q_g$, we apply the same dissipative contraction and jump sampling as in \cite{sander2025largescalestochasticsimulationopen}, but restricted to the local jump set $S_g$. Illustrated in \cref{alg:cTJM}.

\begin{algorithm}[t]
\caption{cTJM (one trajectory)}
\label{alg:cTJM}

\KwRequire{Circuit $\mathcal{U}=(U_1,\ldots,U_M)$; Lindblad jumps $\{(\gamma_m,L_m)\}_{m=1}^k$}

Initialize MPS $\ket{\psi^{(i)}_0}\leftarrow \ket{\psi_0}$\;

\For{$g=1$ \KwTo $M$}{
  $Q_g \leftarrow \mathrm{supp}(U_g)$;\quad
  $S_g \leftarrow \{\, m : \mathrm{supp}(L_m)\cap Q_g \neq \emptyset \,\}$\;

  \textbf{local TDVP gate:}
  $\ket{\phi}\leftarrow
  \mathrm{TDVP}\bigl(\ket{\psi^{(i)}_{g-1}},\, U_g \text{ on } Q_g\bigr)$;
  re-canonicalize / truncate\;

  \textbf{Local dissipative contraction:}
  $\ket{\tilde\psi}\leftarrow \mathcal D_g\,\ket{\phi}$
  \textit{(same $\mathcal D[\cdot]$ as in \cite{sander2025largescalestochasticsimulationopen}, but with $m\in S_g$ only)}\;

  \textbf{Local jump:}
  $\ket{\psi^{(i)}_g}\leftarrow \mathrm{JumpSampling}(\ket{\tilde\psi})$
  \textit{(same as \cite{sander2025largescalestochasticsimulationopen}, but with $\mathcal{L}_g$ only)}\;

  Optionally record
  $X^{(i)}_g=\bra{\psi^{(i)}_g} O \ket{\psi^{(i)}_g}$\;
}

\KwReturn{trajectory $\{\ket{\psi^{(i)}_g}\}_{g=0}^M$; ensemble means via averaging}

\end{algorithm}

\subsection{Error analysis of cTJM}
\label{sec:ctjm-error}
The total error of cTJM splits into a stochastic (Monte Carlo) component and deterministic approximation biases.
The sampling uncertainty is the standard Monte Carlo error
\begin{equation}
\mathrm{SE}\!\left[\widehat{\langle O\rangle}_g\right]
=\sqrt{\frac{\operatorname{Var}(X_g)}{N_{\mathrm{traj}}}},
\end{equation}
so variance reduction directly decreases the prefactor of the $1/\sqrt{N_{\mathrm{traj}}}$ convergence
\cite{sander2025largescalestochasticsimulationopen}.
Deterministic errors arise from the local TDVP realization of gates on the MPS manifold.
Concretely, besides the projection error incurred by restricting the evolution to a finite bond dimension manifold,
there is a TDVP time-integration error of order $O(\delta t^3)$ per (sub)step and $O(\delta t^2)$ over a fixed
evolution interval \cite{sander2025largescalestochasticsimulationopen}.
In addition, interleaving the coherent gate update with the dissipative/noise step constitutes an operator splitting,
whose (Strang-type) time-step error scales as $O(\delta t^3)$ \cite{Strang1968,sander2025largescalestochasticsimulationopen}.
Both the TDVP integration and splitting errors can be systematically reduced by subdividing a gate/window into smaller
effective steps, i.e., by performing multiple TDVP sweeps (substeps) per gate.
Finally, MPS SVD compression introduces a truncation error controlled by the threshold and
the cap $\chi_{\max}$, which can be monitored via discarded weight.
The only error that cannot be estimated exactly in general is the projection error incurred by restricting the dynamics to a lower-dimensional manifold. For TDVP it is the residual
\begin{equation}
\epsilon(\chi) = \left\|\bigl(\id-P_{\mathcal M_\chi,\ket{\Phi}}\bigr)\,H_0\ket{\Phi}\right\|_2^2,
\end{equation}
where $H_0$ is a Hamiltonian, $P_{\mathcal M_\chi,\ket{\Phi}}$ the TDVP projector onto the tangent space of manifold $\mathcal{M}_{\chi}$ at the quantum state $\ket{\Phi}$.
Note that it vanishes for strictly nearest-neighbor Hamiltonians for $2$TDVP, but beyond such special cases there is no general-purpose estimator for its accumulated impact on observables \cite{sander2025largescalestochasticsimulationopen, Haegeman_2014}.

\section{Variance-aware unravelings}

In the sparse Pauli-Lindblad model of \cref{sec:splm}, the Lindblad generator
\begin{equation}
\mathcal{L}(\rho) = \sum_{m=1}^k \gamma_m \big( P_m \rho P_m^\dagger - \rho \big)
\label{eq:splm_generator_recall}
\end{equation}
is a sum of single Pauli-string channels, each of the form $\dot\rho = \gamma_m (P_m \rho P_m - \rho)$.
Throughout this section we therefore fix one such term $(\gamma,P)$ and
construct different unravelings (choices of collapse operators)
that all reproduce the same Pauli-Lindblad contribution
\begin{equation*}
\dot\rho = \gamma (P\rho P - \rho),
\end{equation*}
but have different variances and average bond dimensions in the MPS.
In the full SPLM, these constructions are applied channel-wise with
$(\gamma,P)\leftarrow(\gamma_m,P_m)$ (or to groups of channels) so that the
total generator remains \eqref{eq:splm_generator}.

For any set of active noise channels, we write
\[
\Gamma_{\mathrm{tot}} := \sum_{k\in\mathcal K}\gamma_k,
\]
where $\mathcal{K}$ denotes the general set of jump operators which is dependent on the unraveling method and $|\mathcal{K}|=m$ may not be true. Given a readout observable $O$, we further split
\begin{equation*}
\begin{aligned}
&\Gamma_{\mathrm{anti}} := \sum_{\substack{k\in\mathcal K\\ \{O,P_k\}=0}}\gamma_k,\\
&\Gamma_{\mathrm{comm}} := \sum_{\substack{k\in\mathcal K\\ [O,P_k]=0}}\gamma_k, \\
&\Gamma_{\mathrm{tot}}=\Gamma_{\mathrm{anti}}+\Gamma_{\mathrm{comm}}.
\end{aligned}
\end{equation*}

The Pauli-Lindblad term $\dot\rho=\gamma(P\rho P-\rho)$ admits many equivalent unravelings.
Different choices of collapse operators yield the same Lindblad generator but different trajectory statistics.
Standard sampling realizes this channel via random Pauli flips $P$.
This is unbiased but can lead to large trajectory-to-trajectory fluctuations.

In the following, we introduce two alternative unravelings: (i) the analog unraveling, previously used in the Kraus-operator sampling framework of Ref. \cite{AnalogSimQuantinuum}, and (ii) the projector unraveling, standard in the quantum-trajectory literature but, to our knowledge, not previously explored as a variance- and bond dimension–reduction strategy for MPS-based trajectory simulations.

\subsection{Analog sampling}
\label{sec:analog-unraveling}

For a Pauli-Lindblad term on a (possibly multi-qubit) Pauli string $P$ given by
\begin{equation}
\label{eq:analog-lindblad}
\frac{d\rho}{dt}  \;=\; \gamma\,(P\rho P - \rho),\quad P^\dagger=P,\; P^2=\id,
\end{equation}
we use a unitary analog mixture with rate-free collapse operators
\[
L_\theta := e^{i\theta P},\quad \theta\in\mathbb{T}_\pi:=\mathbb{R}/(\pi\mathbb{Z})\,,
\]
where the $\pi$-periodicity follows from $P^2=\id\Rightarrow e^{i(\theta+\pi)P}=-e^{i\theta P}$.
The most common choice for $\mathbb{T}_\pi$ is $\big(-\frac{\pi}{2},\tfrac{\pi}{2}]$.
Furthermore we define a probability law
\[
w:\mathbb{T}_\pi\to[0,\infty),\quad w(\theta)=w(-\theta),\quad \int_{\mathbb{T}_\pi}\! w(\theta)\,d\theta=1
\]
with $\sum_\theta w(\theta)=1$ in the discrete case and attach the rates to the channels, i.e.
\[
\gamma_\theta=\lambda\,w(\theta).
\]

Here $\lambda$ is an algorithmic Poisson intensity. It sets the expected number of analog kicks per unit time, while $w(\theta)$ specifies how kick angles are distributed. Thus $\gamma_\theta=\lambda w(\theta)$ is the rate assigned to the collapse operator $L_\theta=e^{i\theta P}$. We then choose $\lambda$ such that the averaged action of these kicks reproduces the physical Pauli-Lindblad rate $\gamma$ via the generator-matching condition $\lambda\,s=\gamma$, where $s=\mathbb E_w[\sin^2\theta]$. This is explained in detail in the following.

The corresponding Lindblad contribution is
\[
\sum_{\theta}\gamma_\theta\!\left(L_\theta\rho L_\theta^\dagger-\tfrac12\{L_\theta^\dagger L_\theta,\rho\}\right) = \sum_\theta \gamma_\theta\big(L_\theta\rho L_\theta^\dagger-\rho\big),
\]
since $L_\theta^\dagger L_\theta=\id$.
Using $L_\theta \rho L_\theta^\dagger = \cos^2\!\theta\,\rho+\sin^2\!\theta\,P\rho P + i\sin\theta\cos\theta\,[P,\rho]$ and the symmetry of $w$ to cancel the commutator on averaging, we get
\begin{equation}
\sum_\theta \gamma_\theta\big(L_\theta\rho L_\theta^\dagger-\rho\big)
= \Big(\sum_\theta \gamma_\theta \sin^2\!\theta\Big)\,(P\rho P-\rho)
= \lambda\,s\,(P\rho P-\rho),
\end{equation}
with $s:=\sum_\theta w(\theta)\sin^2\!\theta.$ Hence we enforce
\begin{equation}
\label{eq:match-generator}
\boxed{\ \lambda\,s=\gamma\ }\quad\Longrightarrow\quad \text{same generator as in \eqref{eq:analog-lindblad}}.
\end{equation}

In the MCWF, the jump hazard per unit time for a single channel is \\ $\sum_\theta\gamma_\theta\,\langle L_\theta^\dagger L_\theta\rangle = \sum_\theta \gamma_\theta = \lambda$, which is state-independent.
For a gate of duration $\Delta t$, the integrated hazard is $\Lambda=\lambda\Delta t$, and the probability of one
jump is $\mathbb{P}_{\rm jump} = 1-e^{-\Lambda}$, exactly as in standard quantum-jump sampling or in the TJM for Hamiltonian simulation with unitary Lindblad operators \cite{PlenioKnight, sander2025largescalestochasticsimulationopen, AnalogSimQuantinuum}.

We use two symmetric laws
\renewcommand{\theenumi}{\roman{enumi}}\begin{enumerate}
\item \emph{Two-point:} $\theta\in\{\pm\theta_0\}$ with equal weight, yielding $s=\sin^2\!\theta_0$.
\item \emph{Gaussian:} $\theta\sim\mathcal N(0,\sigma^2)$ (in practice, a symmetric discrete quadrature $\{\theta_k,w_k\}$), for which $s=\tfrac12(1-e^{-2\sigma^2})$ in the continuous limit (and $s=\sum_k w_k\sin^2\!\theta_k$ under discretization).
\end{enumerate}
Given a target per-channel physical rate $\gamma$, choose $\lambda=\gamma/s$ to satisfy \eqref{eq:match-generator}.
Small $\theta_0$ or $\sigma$ (i.e., operators close to identity) imply larger $\lambda$ and thus more frequent but milder jumps, which empirically reduces trajectory variance while preserving the exact mean dynamics \cite{AnalogSimQuantinuum}.

To keep the stochastic steps well-conditioned, we choose angles from a simple hazard policy. Let $\Gamma_{\mathcal G}:=\sum_{p\in\mathcal G}\gamma_p$ be the sum of Lindblad rates over an analog group $\mathcal G$, i.e. all processes use the same analog scheme.
Fix the analog parameters solely by exact generator matching,
\[
\lambda_p\,s=\gamma_p,
\]
and choose the angle law so as not to inflate the hazard.

\subsubsection{Two-point law} Take $\theta_0\in(0,\pi/2]$ with $s=\sin^2\theta_0$ and set
\[
\lambda_p=\gamma_p/s.
\]
To avoid any hazard increase under one-jump sampling, use the conservative choice
\[
\theta_0=\frac{\pi}{2}\quad\Rightarrow\quad s=1,\ \ \lambda_p=\gamma_p,
\]
such that the per-layer hazard equals the physical $\Gamma$.

\subsubsection{Gaussian law} If a Gaussian mixture is explicitly requested, pick an $s\in(0,\tfrac12]$ and set
\[
\sigma=\sqrt{-\tfrac12\ln(1-2s)},\quad \lambda_p=\gamma_p/s.
\]
For a discretized Gaussian quadrature $\{(\theta_k,w_k)\}_{k=1}^M$, compute $s=\sum_k w_k\sin^2\theta_k$ and again set $\lambda_p=\gamma_p/s$, which preserves the generator exactly. $s$ trades fewer but larger kicks (large $s$, small $\lambda_p$) against more frequent but milder kicks (small $s$, large $\lambda_p$). At fixed physical rates $\{\gamma_p\}$, the single-trajectory variance decreases monotonically as $s$ decreases.

The resulting collapse sets for a process $p$ are
\[
\bigl\{\ \sqrt{\lambda_p/2}\,e^{\pm i\theta_0 P}\ \bigr\} \quad\text{or}\quad \bigl\{\ \sqrt{\lambda_p\,w_k}\,e^{i\theta_k P}\ :\ k=1,\dots,M\ \bigr\},
\]
with $\sum L^\dagger_{p,\theta}L_{p,\theta}=\lambda_p\,\id$ per channel and total per-layer hazard $\Lambda=\sum_{p\in\mathcal G}\lambda_p$ (equal to $\Gamma$ in the default two-point choice).

Because the analog kicks are near identity (small $\theta_0$ or small $\sigma$) but occur more frequently (larger $\lambda$), single-trajectory fluctuations of typical observables are substantially reduced compared to standard Pauli-flip unravelings, while the ensemble mean remains unchanged by~\eqref{eq:match-generator} as in \cite{AnalogSimQuantinuum}.

\medskip
\subsubsection{Dissipative contraction for analog unraveling}
Assume all active single-site Pauli-Lindblad channels on site $\ell$ are indexed by $p\in S(\ell)$ with physical rates $\{\gamma_p\}_{p\in S(\ell)}$ and single-site Paulis $\{P^{(\ell)}_p\}_{p\in S(\ell)}$.
Each channel $p$ is unraveled by an analog unitary mixture with a symmetric angle law $w$ and
\begin{equation*}
L^{(\ell)}_{p,\theta} \;=\; \sqrt{\lambda_p\,w(\theta)}\,e^{i\theta P^{(\ell)}_p},
\lambda_p \;=\; \gamma_p/s, \quad s:=\mathbb E_w[\sin^2\theta]\in(0,1].
\end{equation*}
Note that we use the notation $L^{(\ell)}_{p}$ for a matrix in $\mathbb{C}^{2^L,2^L}$ expressed in a tensor product of $2 \times2$ matrices, whereas we used the notation $L^{[\ell]}_{j}$ for a single site jump matrix of size $2 \times2$ in \cref{sec:motivation}.
Since $e^{i\theta P}$ is unitary and $L^{(\ell)}_{p,\theta}{}^\dagger L^{(\ell)}_{p,\theta}=\lambda_p w(\theta)\,\id$, we have for each channel $p$
\[
\sum_{\theta} (L^{(\ell)}_{p,\theta})^\dagger L^{(\ell)}_{p,\theta}
= \Big(\sum_{\theta}\lambda_p w(\theta)\Big)\,\id
= \lambda_p\,\id.
\]
Summing over the local channel set $S(\ell)$ gives
\begin{equation*}
\begin{aligned}
&\sum_{p\in S(\ell)}\sum_{\theta} (L^{(\ell)}_{p,\theta})^\dagger L^{(\ell)}_{p,\theta}
= \Lambda_\ell\,\id,\\
&\Lambda_\ell:=\sum_{p\in S(\ell)}\lambda_p=\frac{1}{s}\sum_{p\in S(\ell)}\gamma_p = \frac{\Gamma_\ell}{s},
\end{aligned}
\end{equation*}
where $\Gamma_\ell:=\sum_{p\in S(\ell)}\gamma_p$ is the physical total rate on site $\ell$.
Finally, summing over sites yields

\begin{equation}
\begin{aligned}
&\sum_{\ell}\sum_{p\in S(\ell)}\sum_{\theta} (L^{(\ell)}_{p,\theta})^\dagger L^{(\ell)}_{p,\theta}
= \Big(\sum_{\ell}\Lambda_\ell\Big)\,\id \\
&= \frac{1}{s}\Big(\sum_{\ell}\Gamma_\ell\Big)\,\id
= \frac{\Gamma_{\mathrm{tot}}}{s}\,\id,
\end{aligned}
\end{equation}
with
\begin{equation}
\Gamma_{\mathrm{tot}}:=\sum_{\ell}\sum_{p\in S(\ell)}\gamma_p.
\end{equation}
Hence the non-Hermitian drift and the global dissipative factor over a window of duration $\delta t$ are
\begin{equation*}
H_{\mathrm{eff}} = H-\tfrac{i}{2}\sum_{\ell}\sum_{p\in S(\ell)}\sum_{\theta}(L^{(\ell)}_{p,\theta})^\dagger L^{(\ell)}_{p,\theta} = H - \tfrac{i}{2}\,\frac{\Gamma_{\mathrm{tot}}}{s}\,\id\\
\end{equation*}
and
\begin{equation*}
\begin{aligned}
&\mathcal{D}_{\text{analog}}[\delta t] = \exp\!\Big[-\tfrac{1}{2}\delta t\sum_{\ell}\sum_{p\in S(\ell)}\sum_{\theta} (L^{(\ell)}_{p,\theta})^\dagger L^{(\ell)}_{p,\theta}\Big]\\
&= \exp\!\Big(-\tfrac{\Gamma_{\mathrm{tot}}}{2s}\,\delta t\Big)\,\id.
\end{aligned}
\end{equation*}

Consequently, the trajectory norm survival probability is $e^{-(\Gamma_{\mathrm{tot}}/s)\,\delta t}$ (total hazard $\Gamma_{\mathrm{tot}}/s$), while the wave function is multiplied by the global scalar $e^{-(\Gamma_{\mathrm{tot}}/2s)\,\delta t}$, which cancels upon renormalization. This makes explicit the $S(\ell)$ summation over all single-site channels per site and the passage from physical rates $\gamma_p$ to algorithmic rates $\lambda_p=\gamma_p/s$ in the analog mixture.

\subsection{Projector sampling}
\label{sec:proj_unraveling}

For moderate-to-strong noise, the variance is often dominated by rare but large trajectory events, namely sharp flips that strongly disturb the state and inflate entanglement. In this regime we use a complementary unraveling based on projector jumps to keep the trajectory variance low.

Consider the multi-qubit Lindbladian with no Hamiltonian and a single Pauli-string channel
\begin{equation}
\label{eq:paulichannel-lindblad}
\frac{d\rho}{dt} \;=\; \gamma\big(P\rho P-\rho\big).
\end{equation}
This representation is not unique and one can make use of the unitary-mixing and identity shifts from \cite{Wiseman_2001} to reduce the variance of the trajectories.
The same dynamics can be simulated by taking the projector-jump unraveling with collapse operators
\begin{equation}
L_\pm \;:=\; \sqrt{\frac{\gamma}{2}}\,(\id\pm P)\;=\;\sqrt{2\gamma}\,\Pi_\pm, \quad \Pi_\pm \;:=\; \frac{\id\pm P}{2}.
\end{equation}

We now show the equivalence of the projector unraveling to the original dissipative Generator.
Because $P=P^\dagger$ and $P^2=\id$, the $\Pi_\pm$ are orthogonal projectors with
$\Pi_\pm^2=\Pi_\pm$, $\Pi_+\Pi_-=0$, and $\Pi_++\Pi_-=I$.
The corresponding Lindblad generator is given by
\begin{equation}
\label{eq:Lindblad-proj}
\begin{split}
  &\sum_{\pm} L_\pm \rho L_\pm^\dagger \;-\; \frac12\Big\{\sum_{\pm} L_\pm^\dagger L_\pm,\rho\Big\}\\
  &= 2\gamma\!\sum_{\pm}\Pi_\pm \rho \Pi_\pm \;-\; \gamma\{ \Pi_+ + \Pi_-, \rho\} \\
  &= 2\gamma\!\sum_{\pm}\Pi_\pm \rho \Pi_\pm \;-\; \gamma\{ \id, \rho\}.
\end{split}
\end{equation}
Using
\begin{align*}
&\sum_{\pm}\Pi_\pm \rho \Pi_\pm = \tfrac14\!\left[(\id{+}P)\rho(\id{+}P) + (\id{-}P)\rho(\id{-}P)\right]= \tfrac12\big(\rho + P\rho P\big),
\end{align*}
we obtain from \eqref{eq:Lindblad-proj} that
\begin{align*}
&\sum_{\pm} L_\pm \rho L_\pm^\dagger \;-\; \frac12\Big\{\sum_{\pm} L_\pm^\dagger L_\pm,\rho\Big\}\\
&= 2\gamma \cdot \tfrac12(\rho + P\rho P) \;-\; \gamma\{\id,\rho\} \nonumber\\
& = \gamma(\rho + P\rho P) - 2\gamma\rho \nonumber\\
&= \gamma\big(P\rho P - \rho\big),
\end{align*}
which coincides with~\eqref{eq:paulichannel-lindblad}.

The next theorem makes variance reduction under projector sampling quantitative.
In particular, it identifies a class of noise windows in which projector trajectories become effectively one-bit, leading to closed-form mean and variance expressions.

\begin{theorem}[Variance under projector-jump trajectories] \label{theorem:projector-variance}
Let the Lindbladian be a sum of Pauli-string channels with rates $\gamma_k$,
$\frac{d\rho}{dt} =\sum_k \gamma_k(P_k\rho P_k-\rho)$, and use the projector unraveling
$L_{k,\pm}=\sqrt{\gamma_k/2}\,(I\pm P_k)$.
Fix an observable $O$ and assume on a time interval $[0,t]$:
(i) no Hamiltonian/gate action; (ii) $\{O,P_k\}=0$ for all active channels $k$.
Let the initial pure state be an $O$-eigenstate with eigenvalue $+1$.
Then the single-trajectory estimator $X_t:=\langle O\rangle_{\mathrm{traj}}(t)\in\{1,0\}$ obeys
\begin{equation}\label{eq:proj-bernoulli-law}
\mathbb{P}[X_t=1]=e^{-2\Gamma_{\mathrm{anti}} t},
\quad \Gamma_{\mathrm{anti}}:=\sum_k\gamma_k.
\end{equation}
Therefore
\begin{equation}
  \mathbb{E}[X_t]=e^{-2\Gamma_{\mathrm{anti}} t}
\end{equation}
and
\begin{equation}
  \label{eq:proj-variance}
  \operatorname{Var}_{\mathrm{proj}}\!\big[\langle O\rangle_t\big] = \operatorname{Var}[X_t] = e^{-2\Gamma_{\mathrm{anti}} t}\!\left(1-e^{-2\Gamma_{\mathrm{anti}} t}\right).
\end{equation}
\end{theorem}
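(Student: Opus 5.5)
The plan is to exploit the special structure of the projector collapse operators: the no-jump drift is trivial, and every jump lands the trajectory in a subspace where $\langle O\rangle$ vanishes identically. The trajectory estimator then reduces to the indicator of the event ``no jump has occurred up to time $t$''. The mean and variance follow from a Poisson waiting-time computation and the Bernoulli variance formula.

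\emph{Step 1 (trivial drift, state-independent hazard).} Using $\Pi_\pm^2=\Pi_\pm$ and $\Pi_++\Pi_-=\id$, as in \eqref{eq:Lindblad-proj}, I will compute
\[
\sum_{\pm}L_{k,\pm}^\dagger L_{k,\pm}=2\gamma_k(\Pi_++\Pi_-)=2\gamma_k\id.
\]
With assumption (i) ($H=0$), the effective Hamiltonian \eqref{eq:Heff} is then $-\tfrac{i}{2}\cdot 2\Gamma_{\mathrm{anti}}\id$, where by (ii) every active channel is anticommuting, so $\Gamma_{\mathrm{anti}}=\sum_k\gamma_k$. Exactly as for \eqref{eq:DissipativeMPO}, the no-jump evolution multiplies the state by a global scalar. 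After renormalization the state is unchanged between jumps. The total jump hazard is the constant $2\Gamma_{\mathrm{anti}}$, independent of the state. Hence the first jump time is exponential with rate $2\Gamma_{\mathrm{anti}}$, and $\mathbb{P}[\text{no jump in }[0,t]]=e^{-2\Gamma_{\mathrm{anti}}t}$. On this event the state is still the initial $+1$ eigenstate, so $X_t=1$.

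\emph{Step 2 (every jump kills $\langle O\rangle$).} The key algebraic observation is that $\{O,P_k\}=0$ implies $OP_k=-P_kO$, hence $O\Pi_{k,\pm}=\Pi_{k,\mp}O$. Therefore
\[
\Pi_{k,\pm}O\Pi_{k,\pm}=O\,\Pi_{k,\mp}\Pi_{k,\pm}=0 .
\]
Consequently, for any normalized state $\ket{\phi}$ in the range of some $\Pi_{k,\pm}$, we get $\bra{\phi}O\ket{\phi}=\bra{\phi}\Pi_{k,\pm}O\Pi_{k,\pm}\ket{\phi}=0$. Immediately after any jump \eqref{eq:jump-sampling}, the state $\Pi_{k,\pm}\ket{\psi}/\|\Pi_{k,\pm}\ket{\psi}\|$ lies in such a range, so $X=0$ there. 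Outcomes with $\|\Pi_{k,\pm}\psi\|=0$ have probability zero and can be ignored. Between jumps nothing changes by Step 1, and every later jump again lands in some projector range. So once the first jump has occurred, $X_s=0$ for all later $s\le t$. This argument does not depend on which channel fires or on the sign outcome, so multiple and mixed jumps need no separate treatment.

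\emph{Step 3 (law, mean, variance).} Combining Steps 1 and 2 gives $X_t=\mathbb{1}[\text{no jump in }[0,t]]\in\{0,1\}$. Thus $X_t$ is Bernoulli with parameter $e^{-2\Gamma_{\mathrm{anti}}t}$, which is \eqref{eq:proj-bernoulli-law}. The mean and \eqref{eq:proj-variance} follow from $\mathbb{E}[X]=p$ and $\operatorname{Var}[X]=p(1-p)$.

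As a consistency check I will verify unbiasedness against the master equation. From $\dot\rho=\sum_k\gamma_k(P_k\rho P_k-\rho)$ and $P_kOP_k=-O$, one gets $\tfrac{d}{dt}\Tr(O\rho)=-2\Gamma_{\mathrm{anti}}\Tr(O\rho)$, hence $\Tr(O\rho(t))=e^{-2\Gamma_{\mathrm{anti}}t}$, matching $\mathbb{E}[X_t]$.

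The step I expect to need the most care is Step 2's persistence claim, since one might worry that a subsequent projector could partially restore $\langle O\rangle$. The identity $\Pi O\Pi=0$ disposes of this for every post-jump state, so the main remaining subtlety is stating the continuous-time jump process precisely. The cTJM with a one-jump-per-window rule only reproduces the exact exponential in the $\delta t\to 0$ limit, or when the per-window probabilities are taken as $1-e^{-2\Gamma\delta t}$.
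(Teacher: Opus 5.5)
Your proposal is correct and follows the paper's proof essentially step for step. The shared steps are the state-independent hazard $\sum_{k,\pm}L_{k,\pm}^\dagger L_{k,\pm}=2\Gamma_{\mathrm{anti}}\id$ giving an exponential first-jump time, the absorbing identity $\Pi_{k,\pm}O\Pi_{k,\pm}=0$, the resulting Bernoulli law, and the Heisenberg-equation check of unbiasedness. Your explicit observation that every later jump again lands in a projector range, so $X$ stays $0$, makes precise a persistence step the paper only asserts when it says the estimator ``remains $0$ thereafter.''
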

\begin{proof}
First we show, that the total jump rate is state-independent:
\begin{equation}
\begin{aligned}
&L_{k,\pm}^\dagger L_{k,\pm}
=\tfrac{\gamma_k}{2}(\id\pm P_k)^\dagger(\id\pm P_k)\\
&=\tfrac{\gamma_k}{2}\,(2\id \pm 2P_k)
= \gamma_k (\id\pm P_k),
\end{aligned}
\end{equation}
hence
\[
\sum_{\pm}L_{k,\pm}^\dagger L_{k,\pm}
= \gamma_k\big[(\id+P_k)+(\id-P_k)\big]=2\gamma_k\,\id.
\]
Summing over $k$ yields
\begin{equation}\label{eq:sumLdagL}
\sum_{k,\pm} L_{k,\pm}^\dagger L_{k,\pm} \;=\; 2\Gamma_{\mathrm{anti}}\id, \quad \Gamma_{\mathrm{anti}}:=\sum_k \gamma_k.
\end{equation}
In the quantum-jump picture, with $H=0$ the effective non-Hermitian drift is $H_{\rm eff}=-\tfrac{i}{2}\sum L^\dagger L=-i\,\Gamma_{\mathrm{anti}} \id$, so $\|\psi(t)\|^2=e^{-2\Gamma_{\mathrm{anti}} t}$ and this norm squared equals the no-jump probability on $[0,t]$, cf. \cite{PlenioKnight, Molmer:92}.

We continue with the absorbing property under anti-commuting projectors.
Fix $k$ with $\{O,P_k\}=0$ and write $\Pi_{k,\pm}=\tfrac12(\id\pm P_k)$. Then
\begin{equation}\label{eq:PiOPi-zero}
\begin{aligned}
&\Pi_{k,\pm} O \Pi_{k,\pm}
=\tfrac14 (\id\pm P_k) O (\id\pm P_k)\\
&=\tfrac14\big(O \pm OP_k \pm P_k O + P_k O P_k\big)
=0,
\end{aligned}
\end{equation}
because $P_k O = - O P_k$ and $P_k O P_k = - O$. Consequently, if a (normalized) jump of channel $k$ occurs,
\begin{equation}
\begin{aligned}
&\rho \;\mapsto\; \rho'_{k,\pm}
=\frac{\Pi_{k,\pm}\rho \Pi_{k,\pm}}{\Tr[\Pi_{k,\pm}\rho]} \\
&\Longrightarrow\quad
\langle O\rangle'_{k,\pm}=\Tr[O\rho'_{k,\pm}]
=\frac{\Tr[\Pi_{k,\pm} O \Pi_{k,\pm}\rho]}{\Tr[\Pi_{k,\pm}\rho]}=0.
\end{aligned}
\end{equation}
Thus, conditional on any first jump (from any anti-commuting channel), the trajectory’s estimator $X:=\langle O\rangle$ becomes equal to $0$ and, with no unitary dynamics, remains $0$ thereafter. We call this an absorbing property. (Contrast: commuting projectors can change coherences but preserve the unconditional mean of $O$.) \cite{PlenioKnight}

Now we take a look at the distribution of the estimator $X_t$ and the Bernoulli law.
Let $T$ denote the (random) time of the first jump. By (1), $T$ is exponentially distributed with rate $2\Gamma_{\mathrm{anti}}$, independent of the state. By (2), $X_t=1$ iff $T>t$ (no jump by time $t$), and $X_t=0$ otherwise. Therefore
\begin{equation}\label{eq:Bernoulli}
\begin{aligned}
&\mathbb P[X_t=1]=\mathbb P[T>t]=e^{-2\Gamma_{\mathrm{anti}} t},\\
&\mathbb P[X_t=0]=1-e^{-2\Gamma_{\mathrm{anti}} t},
\end{aligned}
\end{equation}
i.e., $X_t\in\{0,1\}$ is Bernoulli with parameter $e^{-2\Gamma_{\mathrm{anti}} t}$.
This proves \eqref{eq:proj-bernoulli-law}.
From \eqref{eq:Bernoulli},
\begin{equation}\label{eq:mean-var}
\mathbb E[X_t]=e^{-2\Gamma t},\quad
\operatorname{Var}[X_t]=e^{-2\Gamma t}\big(1-e^{-2\Gamma t}\big),
\end{equation}
which are Eqs.~\eqref{eq:proj-bernoulli-law}-\eqref{eq:proj-variance}.
Independently of trajectories, the Heisenberg equation for $O$ under the generator
$\mathcal L(\rho)=\sum_k \gamma_k(P_k\rho P_k-\rho)$ gives
\begin{equation}
\begin{aligned}
&\frac{d}{dt}\,\Tr[O\rho_t]
=\sum_k \gamma_k\,\Tr[O P_k\rho_t P_k - O\rho_t] \\
&=\sum_k \gamma_k\,\Tr[P_k O P_k \rho_t - O\rho_t]\\
&=\sum_k \gamma_k\,\Tr[(-O-O)\rho_t]
=-2\Gamma_{\mathrm{anti}}\,\Tr[O\rho_t],
\end{aligned}
\end{equation}
where we used $P_k O P_k=-O$ (anticommutation). With $\Tr[O\rho_0]=1$, the solution is $\Tr[O\rho_t]=e^{-2\Gamma_{\mathrm{anti}} t}$, which equals $\mathbb E[X_t]$ in \eqref{eq:mean-var}. Hence the trajectory estimator is unbiased and reproduces the master-equation mean exactly. \cite{Molmer:92,PlenioKnight}
\end{proof}

Since the conditions of the theorem may sound like this only applies to very special cases, we derive an upper and lower bound of the trajectory variance of projector unraveling in the following section.

\subsubsection{Circuit relevance via absorbing windows}

We use the term absorbing window to describe a gate-aligned noise window with an absorbing-state behavior for a chosen readout observable $O$ under projector sampling.
Concretely, consider one noise application window (e.g.\ ``after gate $g$'').
Let $\{(\gamma_k,P_k)\}$ be the active Pauli-string channels in that window and assume that $O$ anticommutes with all of them, i.e., $\{O,P_k\}=0$ for every active $k$ (equivalently $\Gamma_{\rm comm}=0$).
Under the projector unraveling $L_{k,\pm}=\sqrt{\gamma_k/2}\,(I\pm P_k)$, the first jump projects the state into a $P_k$-eigenspace and forces $\langle O\rangle$ to $0$ via $\Pi_{k,\pm} O \Pi_{k,\pm}=0$.
Since there is no unitary evolution inside the window, $\langle O\rangle$ remains $0$ for the rest of the window.
Thus, within such an absorbing window, the trajectory estimator is binary: it equals its input value if and only if no jump occurred, and it is $0$ otherwise, yielding the Bernoulli mean/variance law of Theorem~\ref{theorem:projector-variance}.

Consider one two-qubit gate application followed by a noise sampling process and let $O$ be the
observable we read out in this circuit frame right after the gate.
Let $m:=\langle O\rangle_{\rm in}$ be its value before the noise step and $\langle O\rangle_{\rm out}$ after the noise step. Partition the active local channels into those that anticommute with $O$ and those that commute with $O$, and set
\begin{equation}
\begin{aligned}
&\Gamma_{\rm anti}:=\sum_{\{O,P_k\}=0}\gamma_k, \quad \Gamma_{\rm comm}:=\sum_{[O,P_k]=0}\gamma_k,\\
&\Gamma_{\mathrm{tot}}:=\Gamma_{\rm anti}+\Gamma_{\rm comm}.
\end{aligned}
\end{equation}

Note that two arbitrary Pauli strings always either commute or anticommute \cite{Aaronson_2004}.
Under the projector unraveling $L_{k,\pm}=\sqrt{\gamma_k/2}\,(\id\pm P_k)$ the total jump rate in the step is state-independent and equals $2\Gamma_{\mathrm{tot}}$ (we take $\Delta t=1$). The jump probability is $\mathbb{P}_{\rm jump}=1-e^{-2\Gamma_{\mathrm{tot}}}$. If a jump occurs, it is anti-commuting with probability $\Gamma_{\rm anti}/\Gamma_{\mathrm{tot}}$ and commuting with probability $\Gamma_{\rm comm}/\Gamma_{\mathrm{tot}}$. Thus we can partition the outcomes into three mutually exclusive events

\begin{equation}
\begin{aligned}
&E_0=\{\text{no jump}\},\\
&E_{\rm c}=\{\text{commuting jump(s)}\},\\
&E_{\rm a}=\{\text{anti-commuting jump(s)}\},
\end{aligned}
\end{equation}
with $\mathbb{P}[E_0]=e^{-2\Gamma_{\mathrm{tot}}}$.
Hence $\mathbb{P}[E_{\rm c}\cup E_{\rm a}]=1-e^{-2\Gamma_{\mathrm{tot}}}$,
$\mathbb{P}[E_{\rm c}]=(1-e^{-2\Gamma_{\mathrm{tot}}}) \Gamma_{\rm comm}/\Gamma_{\mathrm{tot}}$ and $\mathbb{P}[E_{\rm a}]=(1-e^{-2\Gamma_{\mathrm{tot}}}) \Gamma_{\rm anti}/\Gamma_{\mathrm{tot}}$.
With this we can now exactly calculate the per-step mean and per-step variance.

We start by calculating the per-step mean.
Commuting jumps do not change the per-step mean of $\langle O\rangle_{\text{out}}$,
whereas anti-commuting jumps force $\langle O\rangle_{\text{out}}\!\mapsto\!0$ in that step. Hence
\begin{align}
\label{eq:mixed-mean}
&\mathbb E[\langle O\rangle_{\rm out}] = m\Big[1-\mathbb{P}[E_{\rm a}]\Big]
= m\!\left[1-\frac{\Gamma_{\rm anti}}{\Gamma}\bigl(1-e^{-2\Gamma}\bigr)\right].
\end{align}

Now we can use that result to calculate the variance.
By construction of the projector channels $L_{k,\pm}\propto(\id\pm P_k)$, we have:
\[
\langle O\rangle_{\rm out}=
\begin{cases}
  m, & \text{on } E_0,\\
  \text{random value with }\mathbb E[\langle O\rangle_{\rm out}\!\mid\!E_{\rm c}]=m, & \text{on } E_{\rm c},\\
  0, & \text{on } E_{\rm a},
\end{cases}
\]
since $\Pi_{\pm}O\Pi_{\pm}=0$ for $\{O,P_k\}=0$.
Set
\begin{align}
&p:=\mathbb{P}[\langle O\rangle_{\rm out}=m]=\mathbb{P}[E_0]+\mathbb{P}[E_{\rm c}]
\\&=e^{-2\Gamma}+(1-e^{-2\Gamma})\frac{\Gamma_{\rm comm}}{\Gamma}.
\end{align}
Then $\langle O\rangle_{\rm out}$ is a mixture of the two point masses $\{m,0\}$ with weights $p$ and $1-p$, plus the within-commuting spread on $E_{\rm c}$.
Applying the law of total variance,
\[
\operatorname{Var}[\langle O\rangle_{\rm out}]
= \underbrace{\operatorname{Var}\big(\mathbb E[\langle O\rangle_{\rm out}\!\mid\!E]\big)}_{\text{between-events spread}}
+\underbrace{\mathbb E\big(\operatorname{Var}[\langle O\rangle_{\rm out}\!\mid\!E]\big)}_{\text{within-events spread}},
\]
where $E \in \{E_{\rm c}, E_{\rm a}, E_{\rm 0}\}$, gives
\begin{equation}
\label{eq:variance-decomp-detailed}
\operatorname{Var}[\langle O\rangle_{\rm out}]
=\underbrace{m^2\,p(1-p)}_{\text{between } \{m,0\}\text{ from }E_0\text{ vs }E_{\rm a}}
+\underbrace{\mathbb{P}[E_{\rm c}]\,\sigma_{\rm comm}^2}_{\text{within }E_{\rm c}},
\end{equation}
where
\[
\sigma_{\rm comm}^2\;:=\;\operatorname{Var}[\langle O\rangle_{\rm out}\mid E_{\rm c}]
\]
is the conditional variance generated by the state-dependent $\pm$ branches of the commuting projector(s).
Because $\operatorname{spec}(O)=\{\pm1\}$ and $\mathbb E[\langle O\rangle_{\rm out}\mid E_{\rm c}]=m$, we have the universal bound
\[
0\ \le\ \sigma_{\rm comm}^2\ \le\ 1-m^2,
\]
with the upper limit attained when the commuting jump(s) fully resolve $O$ to its eigenvalues $\pm1$ in that step.
Using $\mathbb{P}[E_{\rm c}]=(1-e^{-2\Gamma})\Gamma_{\rm comm}/\Gamma$, \eqref{eq:variance-decomp-detailed} yields the sharp sandwich

\begin{equation}
\begin{aligned}
\label{eq:variance-bounds-detailed}
&m^2\,p(1-p)\ \le\ \operatorname{Var}[\langle O\rangle_{\rm out}] \\
&\le\ m^2\,p(1-p)\;+\;(1-e^{-2\Gamma})\,\frac{\Gamma_{\rm comm}}{\Gamma}\,(1-m^2),
\end{aligned}
\end{equation}

and both bounds are attainable. The lower bound by an absorbing window with $\Gamma_{\rm comm}=0$, the upper bound by full polarization of the commuting projectors.
The absorbing-window condition holds exactly for $|0\ldots0\rangle$ inputs and $Z$-type readout observables whenever the back-propagated observable $\tilde O$ remains a $Z$-string and all active SPLM channels anticommute with it (i.e., $\Gamma_{\rm comm}=0$ on that window). 
This occurs, for example, in Clifford segments (CNOT/CZ/SWAP/H/S), CZ brickwork with $Z$ measurements, and Ising layers at Clifford angles such as $R_{ZZ}(\pi/2)$. 
For generic non-Clifford layers $\tilde O$ becomes a linear combination of Paulis, so \eqref{eq:proj-variance} may fail, although projector sampling typically still reduces variance.

\subsection{Long-range noise MPO construction for unravelings}
\label{sec:longrange-mpo}

Let $P$ be a Pauli string supported on two non-adjacent sites $i<j$, e.g.\ $P=\sigma_i\otimes\tau_j$ with $\sigma,\tau\in\{X,Y,Z\}$.
In both the projector and the analog unravelings, each collapse operator reduces (up to a global scalar) to a linear combination
\[
a\,\id\;+\; b\,P,\quad a,b\in\mathbb{C}.
\]
Such an operator admits an exact matrix product operator (MPO) with bond dimension $D=2$, independent of the separation $j-i$.
Writing the local $2\times2$ identity as $\id$, the MPO tensors (blocks are physical operators) are
\[
\begin{aligned}
  &\text{for } \ell<i:\quad W_\ell \;=\; \id, \\
  &\text{site } i:\quad
  W_i \;=\; \bigl[\, \id\;\; \sigma \,\bigr], \\
  &\text{for } i<\ell<j:\quad
  W_\ell \;=\;
  \begin{bmatrix} \id& 0 \\[2pt] 0 & \id
  \end{bmatrix}, \\
  &\text{site } j:\quad
  W_j \;=\;
  \begin{bmatrix} a\,\id\\[2pt] b\,\tau
  \end{bmatrix}, \\
  &\text{for } \ell>j:\quad W_\ell \;=\; \id,
\end{aligned}
\]
so that contracting the virtual (bond-2) indices from left to right yields
$
\cdots \id\cdot [\,\id\ \sigma\,]\cdot \mathrm{diag}(\id,\id)\cdots \mathrm{diag}(\id,\id)\cdot
\begin{bmatrix} a\id\\ b\tau
\end{bmatrix}\cdot \id\cdots = a\,\id+ b\,(\sigma_i\otimes\tau_j)=a\id+bP$.
Only the coefficients differ for analog and projector unraveling:

\begin{itemize}
\item Projector unraveling:
  \begin{align*}
    &\Pi_\pm=\tfrac12(\id\pm P)
    \ \Longleftrightarrow\ (a,b)=\bigl(\tfrac12,\ \pm\tfrac12\bigr),\\
    &L_\pm=\sqrt{\tfrac{\gamma}{2}}\,(\id\pm P)
    \ \Longleftrightarrow\ (a,b)=(1,\ \pm 1)\, \\
    &\text{(overall factor }\sqrt{\tfrac{\gamma}{2}})
  \end{align*}
\item Analog unraveling:
  \begin{align*}
    &U_{\theta}=e^{i\theta P}=\cos\theta\,\id+ i\sin\theta\,P
    \ \Longleftrightarrow\ (a,b)=\bigl(\cos\theta,\ i\sin\theta\bigr),\\
    &L_{\theta}=\sqrt{\lambda\,w(\theta)}\,U_{\theta}
    \ \Longleftrightarrow\ (a,b)=\bigl(\cos\theta,\ i\sin\theta\bigr)\, \\
    &\text{(overall factor }\sqrt{\lambda\,w(\theta)}).
  \end{align*}
\end{itemize}

Applying any long-range projector or analog jump therefore amounts to a single bond-2 MPO contraction across the sites $i,\ldots,j$, after which standard truncation can be used.

\section{Numerical Experiments}

In this section we show the scalability of cTJM due to variance reduction and limited bond dimension growth under several noise regimes with 25 and 127 qubit examples. Furthermore we show that the variance of a 2-qubit bitflip channel aligns with the theory for all unravelings.

\subsection{Numerical variance check (two-qubit bit-flip SPLM)}
\label{sec:variance-check}

We validate the variance predictions of Sec.~\ref{sec:analog-unraveling} and the projector analysis by a controlled two-qubit test in which the only dynamics is noise.
We use the two-qubit sparse Pauli-Lindblad model with equal rates $\gamma$ on the strings
$\{\,X\!\otimes\!\idtwo,\ \idtwo\!\otimes\!X,\ X\!\otimes\!X\,\}$, initialize $\ket{00}$, and apply a two-qubit identity layer between noise layers. We record the single-trajectory estimators $\langle Z_i\rangle_{\mathrm{traj}}(t)$ after every layer and average over trajectories.

In this benchmark the total physical rate is $\Gamma_{\mathrm{tot}} := \sum_k \gamma_k = 3\gamma,
$ but only $\Gamma_{\mathrm{anti}}(Z_i) := \sum_{\{Z_i,P_k\}=0}\gamma_k = 2\gamma$ affects $Z_i$.

For the local observables $Z_i$, exactly two Pauli strings anticommute with $Z_i$, namely the local $X_i$ and the correlated $X_0X_1$. Denoting their total rate by $\Gamma_{\mathrm{anti}}(Z_i)=\gamma_{X_i}+\gamma_{X_0X_1}=2\gamma$, one finds for the ensemble mean
\begin{equation}
\mathbb{E}\big[\langle Z_i\rangle_t\big] \;=\; e^{-2 \Gamma_{\mathrm{anti}}(Z_i) t} \;=\; e^{-4\gamma t}.
\end{equation}
The across-trajectory variances differ by unraveling:
\begin{equation}
\label{eq:var-standard}
\operatorname{Var}_{\text{std}}\!\big[\langle Z_i\rangle_t\big] = 1 - e^{-4 \Gamma_{\mathrm{anti}}(Z_i) t} \;=\; 1 - e^{-8\gamma t},
\end{equation}
\begin{equation}
\operatorname{Var}_{\text{proj}}\!\big[\langle Z_i\rangle_t\big] = e^{-4\gamma t}\!\left(1 - e^{-4\gamma t}\right),
\end{equation}
where the first line reflects a telegraph process under unitary $X$-jumps, and the second follows from the absorbing ``first anticommute projector jump'' with rate $2\Gamma_{\mathrm{anti}}(Z_i)=4\gamma$.
For the analog two-point and Gaussian schemes, an exact transient is not needed here; in this symmetric two-qubit IX/XI/XX case the stationary across-trajectory variance is
\begin{equation}
\lim_{t\to\infty}\operatorname{Var}_{\text{analog}}\!\big[\langle Z_i\rangle_t\big] = \tfrac14.
\end{equation}

For a detailed derivation of the variance see \cref{appendix:analog-variance-2pt-gauss}.
Because the identity-layer test yields independent, stationary noise steps and a fixed
anticommutation structure for $Z_i$, the one-step moment map for $\langle Z_i\rangle$
is identical at every layer. Iterating this same map over $t$ layers therefore produces
exactly the time-$t$ trajectory mean and variance, so our “per-step” formulas
coincide with the full-time curves in this benchmark. In general noisy circuits, the same
exact composition holds on absorbing windows (where the conjugated readout and the
active channels preserve the anticommute pattern); outside such windows, the per-step
formulas serve as local bounds or approximations.

We run cTJM with $N_{\text{traj}}=2000$ trajectories, no truncation effects, and identity gates only.
For analog runs we use a fine Gaussian discretization so that the discrete mixture closely matches the continuous law.

Figure~\ref{fig:variance-2q} (top) plots the measured variances for $Z_0$ and $Z_1$ against the theoretical curves above (dashed).
The projector unraveling follows $e^{-4\gamma t}(1-e^{-4\gamma t})$ and rapidly decays to (numerically) zero.
The standard unraveling approaches unit variance as $1-e^{-8\gamma t}$.
Both analog schemes settle to the predicted plateau $1/4$.
Figure~\ref{fig:variance-2q} (bottom) shows that the ensemble mean $\mathbb{E}[\langle Z_i\rangle_t]$ coincides for all unravelings and matches the density-matrix baseline $e^{-4\gamma t}$.
Overall, projector sampling attains the lowest variance for $\gamma = 0.1, 0.01$ while preserving the mean dynamics, in agreement with our theory.

\begin{figure*}[!t]
\centering
\includegraphics[width=\textwidth]{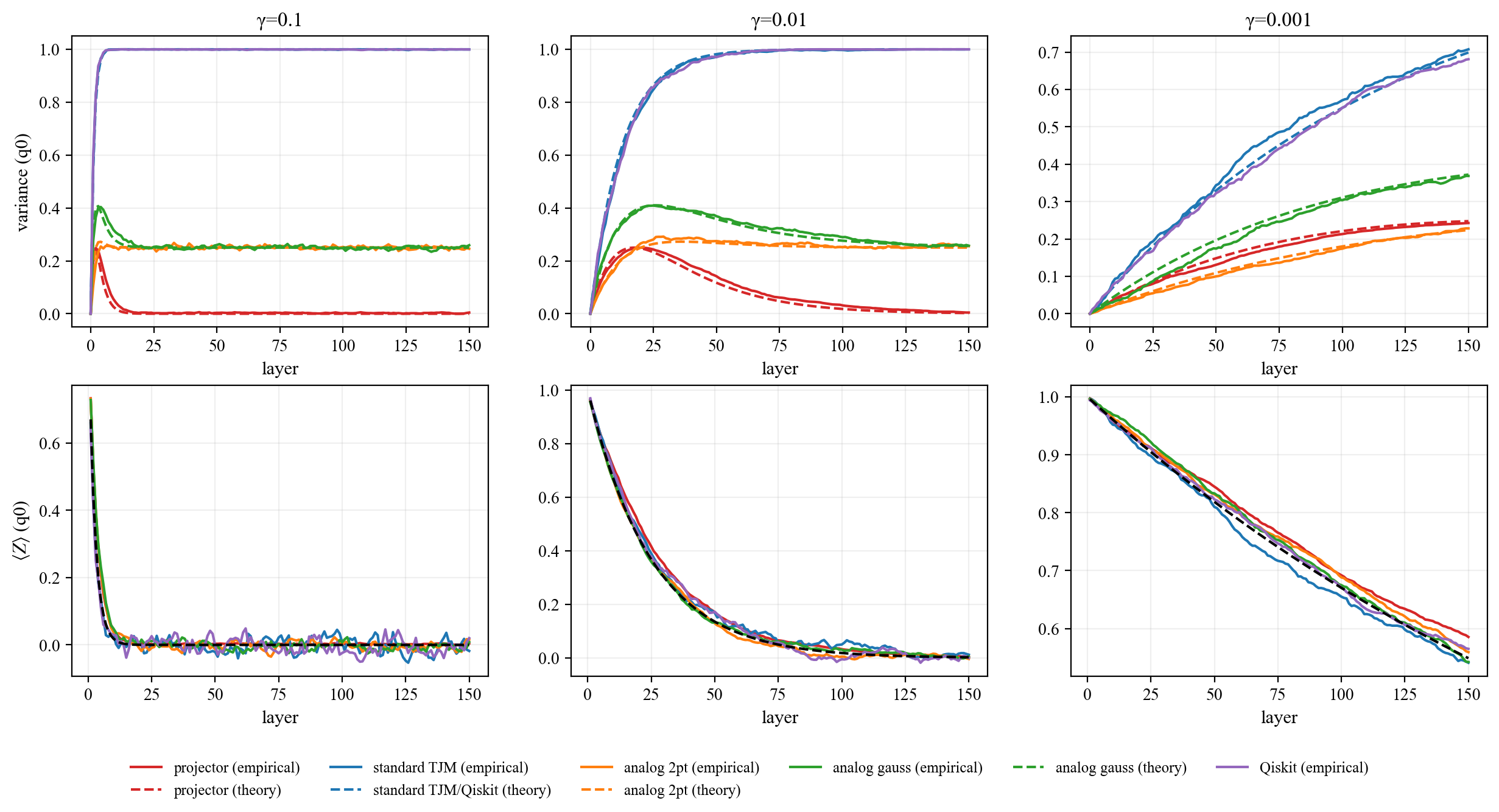}
\caption{\textbf{Numerical variance check for two-qubit bitflip noise:} Empirical and theoretical variances (top) and mean Z expectation values (bottom) for the two-qubit bitflip noise benchmark at rates $\gamma\in\{10^{-3},10^{-2},10^{-1}\}$. Each curve shows the single-qubit observable $\langle Z_0\rangle$ averaged over $N_{\text{traj}}=2000$ trajectories. Solid lines are empirical results from Qiskit (purple), standard TJM (blue), analog Gaussian (green), analog two-point (orange), and projector unraveling (red). Dashed lines indicate analytic predictions: \cref{eq:var-2pt},\cref{eq:var-gauss}, \cref{eq:var-standard}  for analog schemes and \cref{eq:proj-variance} for the projector unraveling. The projector trajectories saturate at the lowest variance for high noise, while the analog unravelings converge to the predicted stationary value $\operatorname{Var}\to1/4$ for weak noise. All unravelings reproduce the same mean decay $\mathbb E[\langle Z\rangle_t]=e^{-4\gamma t}$, confirming generator-matching across noise strengths.}
\label{fig:variance-2q}
\end{figure*}

\subsection{Noisy XY chain (25 sites)}
\label{sec:xy-quench-bitflip}
As a first many-body benchmark at circuit-relevant scale we consider 20 Trotter steps of a global quench in the open-boundary spin-$\tfrac12$ XY chain,
\begin{equation}
H_{\mathrm{XY}}
\;=\;
\sum_{\ell=1}^{N-1} \Big( X_\ell X_{\ell+1} + Y_\ell Y_{\ell+1} \Big),
\end{equation}
simulated here by a Trotterized circuit on $N=25$ qubits with timestep $\delta t=0.1$ (total simulated time $T=2.0$).
The XY chain is a standard integrable testbed for nonequilibrium transport and entanglement growth in one dimension; see, e.g., \cite{XY_PhysRevA.3.786}.

We start from the product state $|0\rangle^{\otimes N}$ and apply Pauli-$X$ on every fourth qubit (1-indexed sites $4,8,12,16,20,24$), i.e.
\begin{equation}
\ket{\psi_0}
\;=\;
\Big(\prod_{\ell\in\{4,8,12,16,20,24\}} X_\ell\Big)\,|0\rangle^{\otimes 25},
\end{equation}
which yields the periodic pattern $\ket{0001\,0001\,0001\cdots}$ and thus nontrivial initial magnetization gradients.

After each two-qubit gate we apply a homogeneous sparse Pauli-Lindblad noise consisting of nearest-neighbor correlated flips $X_\ell X_{\ell+1}$ on all bonds, each with identical rate parameter $\gamma\in\{10^{-3},10^{-2},10^{-1}\}$:
\begin{equation}
\label{eq:xy_noise_layer}
\mathcal{L}_{X}(\rho) =\gamma\sum_{\ell=1}^{N-1}\big(X_\ell X_{\ell+1}\rho X_\ell X_{\ell+1}-\rho\big).
\end{equation}

We compare five trajectory-based simulators:
Qiskit Aer’s MPS reference and the four cTJM unravelings discussed in \cref{sec:cTJM}.
We simulate $N_{\mathrm{traj}}=200$ independent trajectories per method,
cap the bond dimension at $\chi_{\max}=128$,
and use an SVD truncation threshold of $10^{-16}$.

We record local magnetizations $\langle Z_\ell\rangle$.
To quantify sampling efficiency we track
(i) the across-trajectory variance of the single-trajectory estimator(s),
and (ii) the average bond dimension (averaged over trajectories at fixed step, with $\chi\le\chi_{\max}$).

Figure~\ref{fig:large_tau_compare} reports three noise strengths.
The top panels show representative local expectations $\langle Z_\ell\rangle$ for sites $\ell=1$ (solid),
$\ell=12$ (dashed, initially flipped to $\langle Z_{12}\rangle=-1$), and $\ell=25$ (dotted).
Across all $\gamma$, all cTJM unravelings remain unbiased and track the \texttt{Qiskit} MPS reference in the mean.

At weak noise ($\gamma=10^{-3}$) the dynamics is close to coherent XY transport:
the local $\langle Z_\ell\rangle$ traces oscillate and dephase slowly, while the trajectory variance remains small for all methods.
The projector unraveling yields the smallest variance across the full depth,
the analog schemes reduce variance relative to standard Pauli jumps,
and the standard unraveling exhibits the largest fluctuations.

At intermediate noise ($\gamma=10^{-2}$) the projector unraveling again provides the strongest variance reduction and, crucially, a visibly slower growth in bond dimension, staying below the hard cap $\chi_{\max}=128$ up to depth $20$,
whereas the other unravelings saturate near $\chi_{\max}$ around step $\approx 14$.

In the strongly dissipative regime ($\gamma=10^{-1}$) the projector sampling rapidly suppresses trajectory-to-trajectory spread essentially to zero after the initial transient,
consistent with its absorbing-window behavior discussed in \cref{sec:cTJM}.
Simultaneously, the bottom panel demonstrates the most dramatic computational advantage:
projector trajectories remain extremely low-entangled (average bond dimension $\chi\sim\mathcal{O}(1)$),
while standard and analog unravelings continue to generate substantially larger bond dimensions.
This benchmark therefore highlights the practical regime separation:
analog mixtures are beneficial in low-noise settings where small, frequent unitary kicks reduce fluctuations without collapsing the state,
whereas projector jumps are strongly advantageous in moderate-to-high noise, where they simultaneously reduce Monte Carlo variance
and suppress MPS entanglement growth, enabling deeper simulation within fixed $\chi_{\max}$.

\begin{figure*}[!t]
\centering

\includegraphics[width=1.0\linewidth]{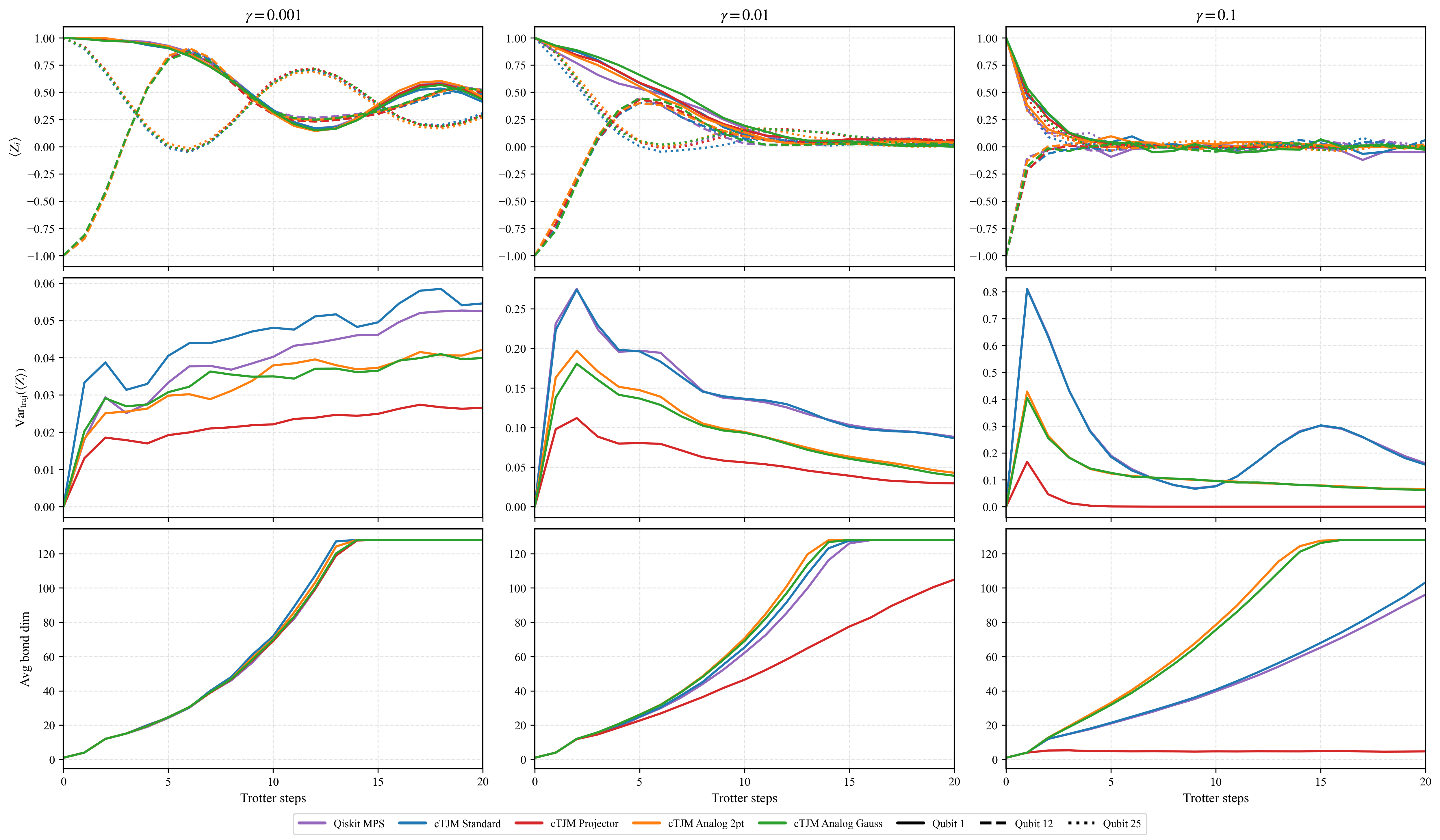}

\caption{\textbf{Noisy XY quench with correlated bit-flip SPLM noise on a chain of $N=25$ qubits.}
  We simulate $20$ first-order Trotter steps with $\delta t=0.1$ ($T=2.0$) starting from $\ket{0001\,0001\,\cdots}$ (flips on sites $4,8,12,16,20,24$).
  After each two-qubit layer we apply the homogeneous Lindbladian channel generated by Eq.~\eqref{eq:xy_noise_layer}, with $\gamma\in\{10^{-1},10^{-2},10^{-3}\}$.
  Each panel shows (top) the trajectory mean of $\langle Z_\ell\rangle$ for $\ell=1$ (solid), $\ell=12$ (dashed), and $\ell=25$ (dotted), (middle) the across-trajectory variance of the single-trajectory estimator, and (bottom) the mean MPS bond dimension per step.
We compare Qiskit Aer’s MPS baseline to four cTJM unravelings using $N_{\mathrm{traj}}=200$, truncation threshold $10^{-16}$, and hard cap $\chi_{\max}=128$.}

\label{fig:large_tau_compare}
\end{figure*}

\subsection{Noisy IBM kicked Ising (127 sites)}
\label{sec:127-qubits}

To demonstrate the scalability of cTJM in a hardware-motivated setting, we simulate the 127-qubit kicked-Ising benchmark circuit used on IBM Eagle in Ref.~\cite{kim2023evidence}. The target dynamics is the transverse-field Ising Hamiltonian on the IBM Eagle heavy-hex connectivity
\begin{equation}
H = -J \sum_{\langle i,j\rangle} Z_i Z_j + h \sum_i X_i ,
\end{equation}
whose real-time evolution is implemented via a first-order Trotter step of the form

\begin{equation}
U(\theta_h)=
\left(\prod_{\langle i,j\rangle}\exp\!\left(i\frac{\pi}{4}\,Z_i Z_{j}\right)\right)
\left(\prod_{i}\exp\!\left(-i\frac{\theta_h}{2}\,X_i\right)\right),
\end{equation}

with gate angles $\theta_J=-2J\delta t$ and $\theta_h=2h\delta t$ and Trotter step size $\delta t$. Following ~\cite{kim2023evidence} we fix $\theta_J=-\pi/2$, throughout we use $\delta t=0.1$, hence $J=5\pi/2$, and vary
\begin{equation}
\theta_h \in \left\{0,\ \frac{\pi}{8}, \frac{\pi}{2}\right\}
\quad\Leftrightarrow\quad
h=\frac{\theta_h}{2\delta t}.
\end{equation}
We follow IBM's hardware qubit labels $0,\dots,126$ and report the local observable $\langle Z_{106}\rangle$~\cite{kim2023evidence}.

Beyond homogeneous local noise, cTJM supports gate-dependent 
two-qubit CPTP noise channels, including long-range (non-nearest-neighbor) two-qubit channels used to model correlated errors/crosstalk in a device-calibrated manner.
In this benchmark we scale an overall depolarizing noise strength by a factor $\gamma\in\{10^{-3},10^{-2},10^{-1}\}$ and apply the corresponding two-qubit depolarizing noise channel after each entangling gate (which can be extended analogously for single-qubit layers or other noise models if desired).

For $\theta_h=0$, the $R_X$ layer is identity and the $R_{ZZ}(-\pi/2)$ layer acts as a global phase on $\ket{0}^{\otimes n}$; the circuit is therefore identity-equivalent on the chosen initial state and serves as a baseline isolating pure noise effects at $n=127$.
For $\theta_h=\pi/2$, both $R_X(\pi/2)$ and $R_{ZZ}(-\pi/2)$ are Clifford and the circuit exhibits nontrivial entangling dynamics while remaining Clifford.
For $\theta_h=\pi/8$, the circuit becomes non-Clifford; exact stabilizer-tableau simulation no longer applies, whereas Pauli-propagation methods remain applicable but generally incur Pauli branching and require truncation to remain efficient~\cite{Aaronson_2004, angrisani2025simulatingquantumcircuitsarbitrary}.

All $100$ trajectories are simulated using soft SVD truncation with threshold $10^{-16}$ (near machine precision) and a hard truncation at $\chi_{\text{max}}=128$, so the observed bond dimension growth is dictated by the trajectory entanglement/noise tradeoff rather than an aggressive rank cap up until $\chi_{\text{max}}$. We initialize in $\ket{0}^{\otimes 127}$ as in \cite{kim2023evidence}.

Figure~\ref{fig:127q_plot} compares projector (solid) and standard (dashed) unravelings for nine noisy 127-qubit kicked-Ising settings (five Trotter steps) with two-qubit depolarizing noise of strength $\gamma\in\{10^{-1},10^{-2},10^{-3}\}$ applied after each entangling gate on the hardware connectivity.
As expected for depolarizing noise, increasing $\gamma$ systematically drives the local magnetization $\langle Z_{106}\rangle$ toward zero. Depolarizing channels shrink Bloch components uniformly, hence suppress local Pauli expectations~\cite{Xiong_2020, BRE02}. Both unravelings generate the same Lindblad evolution and are therefore unbiased; accordingly, the trajectory means (top row) agree closely between solid and dashed curves within finite-$N_{\mathrm{traj}}$ sampling fluctuations. The main practical difference is the estimator variance and, to a lesser extent, the trajectory entanglement reflected in the MPS bond dimension.

The three columns illustrate distinct dynamical regimes.
For $\theta_h=0$ the unitary is identity-equivalent on $\ket{0}^{\otimes 127}$, so any decay of $\langle Z_{106}\rangle$ is purely noise-induced, providing a baseline that isolates the action of the two-qubit noise channel.
For $\theta_h=\pi/2$, the mean rapidly approaches $\langle Z_{106}\rangle\approx 0$ largely independently of $\gamma$, indicating that unitary scrambling dominates the decay of local $Z$-magnetization after the first step(s), while noise primarily affects fluctuations.
For the non-Clifford setting $\theta_h=\pi/8$, the mean remains strongly noise-dependent and the simulation cost is entanglement-limited: only this case reaches the cap $\chi_{\max}=128$ at low noise $\gamma=10^{-3}$.

The middle row reports the empirical trajectory variance of $\langle Z_{106}\rangle$ for both unravelings. For $\theta_h \in \{0, \pi/8\}$ the projector unraveling keeps the variance uniformly bounded (peaking around $\sim 0.4$), whereas the standard unraveling can exhibit substantially larger fluctuations, most prominently in the baseline $\theta_h=0$ where the variance quickly approaches order one at stronger noise. Since the sampling error satisfies $\mathrm{SE}=\sqrt{\mathrm{Var}/N_{\mathrm{traj}}}$, the reduced variance directly translates into tighter error bars for fixed trajectory budget~\cite{Molmer:92}. We observed that also in longer runs of up to 20 Trotter steps the projector-unraveling variance never permanently exceeded $0.4$.

For the strongly scrambling Clifford case $\theta_h=\pi/2$, the projector variance can be higher because commuting projector jumps induce measurement backaction and add a nonzero within-event contribution $\mathbb{P}[E_{\rm c}]\sigma_{\rm comm}^2$ (cf. Eq.~\eqref{eq:variance-decomp-detailed}), whereas commuting unitary Pauli jumps in the standard unraveling leave $\langle Z_{106}\rangle$ unchanged.

Finally, the bottom row visualizes the average MPS bond dimension for projector (solid) and standard (dashed) unravelings, illustrating the trade-off between coherent entanglement growth (larger bond dimension at smaller $\gamma$) and noise-induced trajectory entanglement (visible even at $\theta_h=0$). Differences between unravelings are regime-dependent but remain secondary to the overall entanglement-noise trends; in particular, in the non-Clifford case $\theta_h=\pi/8$ the simulation becomes entanglement-limited, reaching the cap $\chi_{\max}=128$ at $\gamma=10^{-3}$ for both unravelings, and already at $\gamma\in\{10^{-2},10^{-1}\}$ for the standard unraveling.

\begin{figure*}[!t]
\centering
\includegraphics[width=1\linewidth]{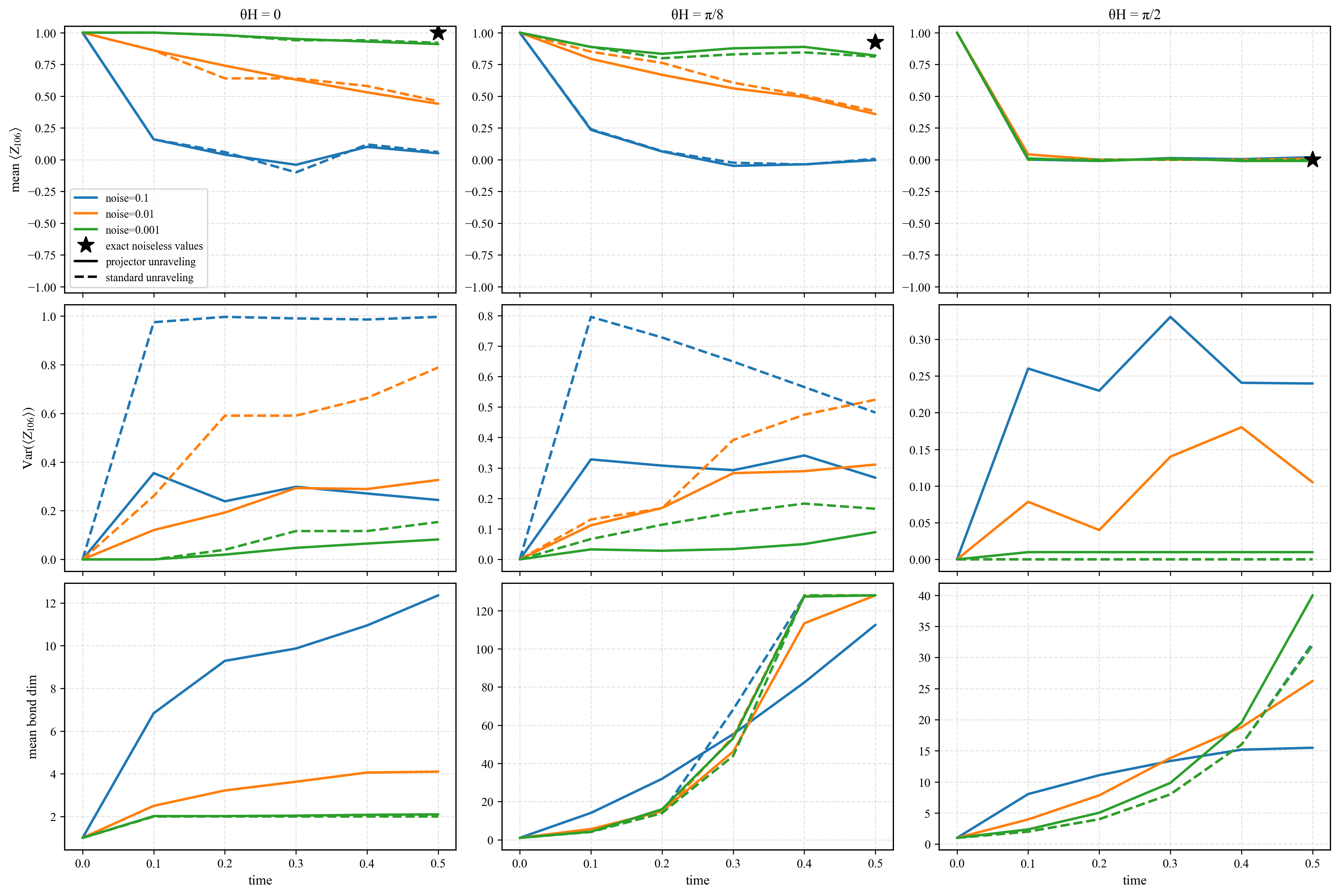}
\caption{\textbf{Noisy 127-qubit kicked-Ising dynamics on IBM Eagle connectivity.}
  We simulate five Trotter steps ($\delta t=0.1$, $t\in\{0,0.1,\dots,0.5\}$) with $\theta_J=-\pi/2$ and $\theta_h\in\{0,\pi/8,\pi/2\}$ with projector (solid) and standard unraveling (dashed).
  After each entangling gate we apply a two-qubit depolarizing channel of strength $\gamma\in\{10^{-3},10^{-2},10^{-1}\}$.
  Top: trajectory mean of $\langle Z_{106}\rangle$ over $N_{\mathrm{traj}}=100$ trajectories per unraveling method.
  Middle: empirical trajectory variance of $\langle Z_{106}\rangle$, implying sampling error $\mathrm{SE}=\sqrt{\mathrm{Var}/N_{\mathrm{traj}}}$.
  Bottom: average MPS bond dimension at each time step (soft SVD truncation $10^{-16}$, hard cap $\chi_{\max}=128$).
Black star: zero-noise extrapolated (noise-mitigated) experimental estimate of $\langle Z_{106}\rangle$ at $t=0.5$ from Ref.~\cite{kim2023evidence}.}
\label{fig:127q_plot}
\end{figure*}

\section{Discussion and outlook}
\label{sec:discussion}

We presented the circuit Tensor Jump Method (cTJM), a trajectory-based tensor network framework for simulating noisy quantum circuits that unifies local TDVP gate application with the TJM. For sparse Pauli-Lindblad hardware models, the jump hazard becomes state-independent and the dissipative contraction reduces to a global scalar that cancels upon renormalization. This removes auxiliary hazard estimators and makes layer-wise sampling stable and efficient while retaining pure state MPS scaling.

A central advantage of cTJM is its ability to represent hardware-connectivity noise, including correlated multi-qubit channels on non-adjacent subsets. In both the analog and projector unravelings, each collapse operator is (up to a scalar) of the form $a\,\id+b\,P$, which admits an exact bond-2 MPO independent of the separation of the support (\cref{sec:longrange-mpo}). Hence, long-range crosstalk terms can be applied without SWAP overhead and without inflating operator bond dimensions, enabling direct simulation of calibrated connectivity-induced errors beyond local noise models.

We analysed two variance-aware unravelings with complementary regimes. Analog unitary mixtures match the Pauli-Lindblad generator exactly under symmetric angle laws and reduce fluctuations in the weak-noise limit by replacing rare Pauli flips with frequent near-identity kicks. Projector jumps exploit unraveling freedom to induce absorbing-window behavior: when the readout anticommutes with all active channels, we obtain closed-form Bernoulli variance laws (\cref{theorem:projector-variance}) and observe strong suppression of both Monte Carlo variance and bond dimension growth. Numerically, this regime separation is visible in the 25-qubit XY quench and persists at hardware scale in the 127-qubit kicked-Ising benchmark, where variance remains bounded and entanglement growth is controlled at fixed truncation.

Looking forward, the explicit dependence of variance on local anticommute rates suggests hybrid, window-adaptive unravelings that switch between analog and projector sampling to minimize a combined cost proxy of variance and bond dimension or minimize trajectory entanglement by adaptive unraveling schemes. Further extensions include multi-jump-per-window sampling for larger hazards and broadening beyond Pauli-Lindblad noise via structured approximations while retaining connectivity-faithful correlated channels.

\section{Code Availability}

The algorithms and numerical experiments reported in this work were implemented in Julia and are available in the repository \cite{frohlich_yaqs_julia_2026}. In addition, an open-source Python implementation is available through the \emph{MQT-YAQS} package \cite{YAQS}, which is part of the \emph{Munich Quantum Toolkit} \cite{wille_mqt2024}.

\begin{acknowledgments}
This research was supported by the Einstein Research Unit (ERU) on quantum devices (WIAS) and joint work with the Technical University of Munich. Martin Eigel acknowledges partial financial support from the German Federal Ministry of Education and Research (BMBF) under grant agreement No.~13N17160 (Q-ROM - Quantum Read-Once-Memory: Verwandlung von klassischen Daten zu Quantenzuständen). The authors acknowledge the use of AI-based tools to support the preparation of this manuscript. All AI-assisted output was critically reviewed and verified by the authors. The authors assume responsibility for all content.
\end{acknowledgments}

\bibliographystyle{apsrev4-2}
\bibliography{bib}

\newpage
\onecolumngrid

\appendix
\section{Closed-form variance for analog two-point and Gaussian schemes (two-qubit IX/XI/XX test)}
\label{appendix:analog-variance-2pt-gauss}

We derive the across-trajectory variance of local $Z$-observables under the analog-unitary unraveling for the two-qubit “noise-only” benchmark used in the main text (Sec.~\ref{sec:variance-check}): equal physical rates $\gamma$ are applied on the Pauli strings $\{\,X\!\otimes\!I,\ I\!\otimes\!X,\ X\!\otimes\!X\,\}$, we initialize $\ket{00}$, insert identity layers between noise layers, and record $Z_i(t):=\langle Z_i\rangle_{\rm traj}(t)$ after each layer. Each anticommuting channel uses rate-free kicks $U_\theta=e^{i\theta P}$ with algorithmic rate density $\gamma_\theta=\lambda\,w(\theta)$ such that
$
\lambda\,s=\gamma,\quad s:=\mathbb E_w[\sin^2\theta]
$ and with  $w(\theta)=w(-\theta)$ being the symmetric angle law. \\

\paragraph*{Mean (scheme-independent)}
Let $P$ be a Pauli string with $\{P,Z_i\}=0$ and $U_\theta:=e^{i\theta P}$ the rate-free analog unitary.
Using the Euler's identity for matrices (which holds true because $P^2=\id$), the conjugation of $Z_i$ closes on a
two-dimensional subspace and yields
\begin{equation}
\label{eq:K_i_operator}
U_\theta^\dagger\,Z_i\,U_\theta = \cos(2\theta)\,Z_i-\sin(2\theta)\,K_i,
\quad
K_i := \tfrac{i}{2}[P,Z_i],
\end{equation}
i.e.\ a rotation by angle $2\theta$ around the $P$-axis (the factor $2$ appears because we use $e^{i\theta P}$, not $e^{i(\theta/2)P}$).

Since $\sin(2\theta)$ is odd, $\mathbb E_w[\sin(2\theta)]=0$, while
\[
\mathbb E_w[\cos(2\theta)] \;=\; \mathbb E_w[1-2\sin^2\theta] \;=\; 1-2s.
\]
Hence a single averaged kick maps the mean of $Z_i$ to $(1-2s)\,\mathbb E[Z_i]$.

For an anticommuting channel $p$ with state-independent analog hazard $\lambda_p$,
the conditional update of the mean is
\[
\mathbb{E}[Z_i(t{+}\delta t)\mid Z_i(t)] =\Big(1+\lambda_p\delta t[\mathbb{E}_w(\cos2\theta)-1]\Big)Z_i(t)+o(\delta t).
\]
Taking expectations and subtracting $\mathbb E[Z_i(t)]$, then dividing by $\delta t$ and sending $\delta t\to0$ yields
\[
\frac{d}{dt}\,\mathbb E[Z_i(t)] \;=\; \lambda_p\!\left(\mathbb E_w[\cos(2\theta)]-1\right)\mathbb E[Z_i(t)]
\;=\; -2\lambda_p s\,\mathbb E[Z_i(t)],
\]
so $\mathbb E[Z_i(t)]=e^{-2\lambda_p s\,t}\,\mathbb E[Z_i(0)]=e^{-2\lambda_p s\,t}$ for that channel. Note that in our test $\mathbb E[Z_i(0)]=1$.\\
For $Z_i$ in the IX/XI/XX test, there are exactly two anticommuting channels (the local $X_i$ and the correlated $X_0X_1$), each with physical rate $\gamma$.
Analog generator matching enforces $\lambda_p s=\gamma$ for each $p$, hence
\begin{equation}
\mathbb E[Z_i(t)]
=\exp\!\Big(t\sum_{\text{anti }p}\lambda_p(\mathbb E_w[\cos 2\theta]-1)\Big)\mathbb E[Z_i(0)]
=\exp\!\Big(-2t\sum_{\text{anti }p}\lambda_p s\Big)
=\exp(-4\gamma t),
\end{equation}
where we have used the identities $\mathbb{E}_w[\cos(2\theta)]-1=-2s$ with $s:=\mathbb{E}_w[\sin^2\theta]$, channel-wise generator matching $\lambda_p s=\gamma_p$, hence $\sum_{\text{anti }p}\lambda_p s=\sum_{\text{anti }p}\gamma_p$ and $\sum_{\text{anti }p}\gamma_p=\gamma+\gamma=2\gamma$. Therefore $\big(\mathbb E[Z_i(t)]\big)^2=e^{-8\gamma t}$, independent of the angle law $w$.

\paragraph*{Second moment and variance}
In the following we derive the second moment $\mathbb E\!\big[Z_i(t)^2\big]$ and the variance $\operatorname{Var}[Z_i(t)]$.
The single–site estimator of a single trajectory is $Z_i(t)=\langle\psi_t|Z_i|\psi_t\rangle.$
Its square can be written as a two–copy expectation:
\[
Z_i(t)^2 = \mathrm{Tr}\left[(Z_i\otimes Z_i)\big(|\psi_t\rangle\langle\psi_t|\big)^{\otimes 2}\right].
\]
Averaging over trajectories defines
\[
\sigma(t):=\mathbb E\big[|\psi_t\rangle\langle\psi_t|^{\otimes 2}\big],\quad
\mathbb E\big[Z_i(t)^2\big]=\mathrm{Tr}\left[(Z_i\otimes Z_i)\sigma(t)\right].
\]
Under a quantum-trajectory unraveling, $\sigma(t)$ obeys a Markovian master equation generated by a two-copy Liouvillian, obtained by replacing each single-copy jump ($L$) by $L\otimes L$ (and likewise for the effective no-jump drift) \cite{salgado2002lindbladianevolutionselfadjointlindblad}. For our analog unitary collapse operators $L_{p,\theta}=\sqrt{\lambda_p w(\theta)}U_{p,\theta}$ with $U_{p,\theta}=e^{i\theta P_p}$, the two-copy generator on $\sigma$ is
\begin{equation}
\mathcal L^{(2)}(\sigma)
=\sum_{p}\lambda_p\int w(\theta)\Big[(U_{p,\theta}\otimes U_{p,\theta})\sigma(U_{p,\theta}^\dagger\otimes U_{p,\theta}^\dagger)-\sigma\Big]d\theta,
\end{equation}
where we have used the entry-wise tensor product of maps, specialised to conjugation $\mathrm{Ad}_U(X)=UXU^\dagger$ gives $(\mathrm{Ad}_U\otimes \mathrm{Ad}_U)(\sigma)=(U \otimes U)\sigma(U^\dagger\otimes U^\dagger)$ \cite{watrous2018theory}.

Now we fix qubit $i$ and a single anticommuting Pauli string $P$ (so $\{Z_i,P\}=0$) and recall $K_i$ from \cref{eq:K_i_operator}, for which

\begin{equation}
U_{p,\theta}^\dagger\,K_i\,U_{p,\theta} = \sin(2\theta)\,Z_i \;+\; \cos(2\theta)\,K_i
\end{equation}
holds true.

\medskip
On two copies, this single channel preserves the 3-dimensional subspace
\[
\mathcal V_p \;:=\; \mathrm{span}\big\{\,Z_i\!\otimes\! Z_i,\; S_p,\; K_i\!\otimes\! K_i\,\big\},
\quad
S_p \;:=\; Z_i\!\otimes\!K_i \;+\; K_i\!\otimes\!Z_i,
\]
because $U_{p,\theta}\!\otimes\!U_{p,\theta}$ applies the same $2\theta$-rotation to each factor.
Writing $c:=\cos(2\theta)$, $s:=\sin(2\theta)$, the single-kick action $U_{p,\theta}\!\otimes\!U_{p,\theta}$ (before averaging over $\theta$) is
\begin{equation}
\label{eq:ZZ-S-KK-singlekick-plus}
\begin{aligned}
  Z\!\otimes\!Z &\;\mapsto\; c^2\,(Z\!\otimes\!Z)\;-\;c s\,S_p\;+\;s^2\,(K\!\otimes\!K),\\[2pt]
  S_p &\;\mapsto\; -2c s\,(Z\!\otimes\!Z)\;+\;(c^2-s^2)\,S_p\;+\;2c s\,(K\!\otimes\!K),\\[2pt]
  K\!\otimes\!K &\;\mapsto\; s^2\,(Z\!\otimes\!Z)\;+\;c s\,S_p\;+\;c^2\,(K\!\otimes\!K).
\end{aligned}
\end{equation}

\medskip
Averaging over an even angle law $w(\theta)=w(-\theta)$ kills all $c s$-terms ($\mathbb{E}_w[c s]=0$).
Using the standard power reducing formulas
\[
\mathbb{E}_w[c^2]=\tfrac12\big(1+\mathbb{E}_w[\cos(4\theta)]\big),\quad
\mathbb{E}_w[s^2]=\tfrac12\big(1-\mathbb{E}_w[\cos(4\theta)]\big),
\]
the $\theta$-averaged one-kick map on $\mathcal V_p$ in the ordered basis $(Z\!\otimes\!Z,\; S_p,\; K\!\otimes\!K)$ is

\begin{equation}
\label{eq:Mp-plus}
\mathsf M_p \;=\;
\begin{pmatrix}
  \mathbb{E}[c^2] & 0 & \mathbb{E}[s^2]\\[2pt]
  0 & \mathbb{E}[c^2{-}s^2] & 0\\[2pt]
  \mathbb{E}[s^2] & 0 & \mathbb{E}[c^2]
\end{pmatrix}
\;=\;
\begin{pmatrix}
  \frac{1+m_4}{2} & 0 & \frac{1-m_4}{2}\\[4pt]
  0 & m_4 & 0\\[4pt]
  \frac{1-m_4}{2} & 0 & \frac{1+m_4}{2}
\end{pmatrix},
\quad
m_4:=\mathbb{E}_w[\cos(4\theta)].
\end{equation}
Here eigenvectors are eigenoperators of $\mathsf M_p$: since the $2\times2$ block on \\ $\mathrm{span}\{{Z\otimes Z,K\otimes K}\}$ has the form $\bigl[
\begin{smallmatrix}a&b\\ b&a
\end{smallmatrix}\bigr]$ with $(a=\tfrac{1+m_4}{2}), (b=\tfrac{1-m_4}{2})$, the symmetric/antisymmetric combinations $v_0:=Z\otimes Z+K\otimes K$ and $v_-:=Z \otimes Z-K \otimes K$ diagonalize it, with eigenvalues $1$ and $m_4$, respectively; the cross-mode $S_p$ decouples and also has eigenvalue $m_4$. Passing from the one-kick map to continuous time with Poisson hazard $\lambda_p$ yields the generator $\lambda_p(\mathsf M_p-\mathsf I)$; an eigenvalue $\mu$ thus contributes a factor $e^{\lambda_p(\mu-1)t}$, i.e., decay rate $\lambda_p(1-\mu)$. Hence $v_-$ and $S_p$ decay at rate $\lambda_p(1-m_4)$, while $v_0$ is conserved.

It is convenient write
\begin{equation*}
1-m_4 \;=\; 1-\mathbb{E}_w[\cos(4\theta)]
\end{equation*}
as
\begin{equation}
\label{eq:b_p_definition}
b_p \;:=\; \frac{\lambda_p}{2}\Big(1-\mathbb{E}_w[\cos(4\theta)]\Big)
\;=\; \frac{\gamma_p}{s}\cdot\frac{1-\mathbb{E}_w[\cos(4\theta)]}{2},
\quad (\lambda_p s=\gamma_p).
\end{equation}
With the initial state $\sigma(0)=\ket{00}\!\bra{00}^{\otimes 2}$, along $\mathcal V_p$ we have
\begin{equation*}
Z\!\otimes\! Z \;=\; \tfrac12 v_0+\tfrac12 v_-
\;\;\Longrightarrow\;\;
e^{\,t\lambda_p(\mathsf M_p-\mathsf I)}(Z\!\otimes\! Z)
\;=\; \tfrac12 v_0+\tfrac12 e^{-2b_p t}\,v_-.
\end{equation*}
Taken the expectation value of $Z\!\otimes\! Z$ yields the single–channel second moment
\begin{equation*}
\mathbb{E}_p\!\big[Z_i(t)^2\big]
\;=\; \tfrac12 \;+\; \tfrac12\,e^{-2b_p t}.
\end{equation*}

In this example $Z_i$ has exactly two anticommuting channels with identical parameters, such that $b_1 = b_2 =: b$. Consequently we get:
\begin{equation}
\label{eq:second-moment-master}
  \mathbb E\!\big[Z_i(t)^2\big]
  \;=\; \tfrac14 \;+\; \tfrac12\,e^{-2bt} \;+\; \tfrac14\,e^{-4bt},
  \quad
  \operatorname{Var}[Z_i(t)]
  \;=\; \tfrac14 \;+\; \tfrac12\,e^{-2bt} \;+\; \tfrac14\,e^{-4bt} \;-\; e^{-8\gamma t}.
\end{equation}

\paragraph*{Two-point law}
For $\theta\in\{\pm\theta_0\}$ with equal weight,
$
s=\sin^2\theta_0,\
\mathbb E[\cos(4\theta)]=\cos(4\theta_0)=1-8s(1-s).
$
Plugging into \eqref{eq:b_p_definition} gives
\begin{equation}
b_{\rm 2pt}
=\frac{\gamma}{s}\cdot\frac{1-(1-8s(1-s))}{2}
=4\gamma(1-s).
\end{equation}
Thus
\begin{equation}
\label{eq:var-2pt}
\operatorname{Var}_{\rm 2pt}[Z_i(t)]
=\tfrac14+\tfrac12\,e^{-8\gamma(1-s)t}+\tfrac14\,e^{-16\gamma(1-s)t}-e^{-8\gamma t}.
\end{equation}

\paragraph*{Gaussian law}
For $\theta\sim\mathcal N(0,\sigma^2)$, $s=\tfrac12(1-e^{-2\sigma^2}),\ \mathbb E[\cos(4\theta)]=e^{-8\sigma^2}=(1-2s)^4$.
Hence
\begin{equation}
b_{\rm gauss}
=\frac{\gamma}{s}\cdot\frac{1-(1-2s)^4}{2}
=\gamma\,\frac{1-e^{-8\sigma^2}}{1-e^{-2\sigma^2}}
=\gamma\,\big(1+e^{-2\sigma^2}+e^{-4\sigma^2}+e^{-6\sigma^2}\big),
\end{equation}
and
\begin{equation}
\label{eq:var-gauss}
\operatorname{Var}_{\rm gauss}[Z_i(t)]
=\tfrac14+\tfrac12\,e^{-2b_{\rm gauss} t}+\tfrac14\,e^{-4b_{\rm gauss} t}-e^{-8\gamma t}.
\end{equation}

\end{document}